\newtheorem{thm}{Theorem}[section]
\newtheorem{example}[thm]{Example}
\newtheorem{notation}[thm]{Notation}
\theoremstyle{proposition}
\newtheorem{prop}{Proposition}[section]
\theoremstyle{definition}
\newtheorem{defn}{Definition}[section]
\theoremstyle{remark}
\newtheorem{rem}{Remark}[section]
\numberwithin{equation}{section}
\begin{document} 

\title{ERROR CORRECTING CODES FROM SUB-EXCEEDING FUNCTIONS}
\author{Luc RABEFIHAVANANA $^1$ \\Harinaivo ANDRIATAHINY $^2$ \\Toussaint Joseph RABEHERIMANANA $^3$\\[6pt]
 $^{1,2,3}$ Department of Mathematics and Computer Science,\\Faculty of Sciences, University of Antananarivo\\[6pt]
 $^{1}$ e-mail: lucrabefihavanana@yahoo.fr,\\$^{2}$e-mail: aharinaivo@yahoo.fr\\$^{3}$ e-mail: rabeherimanana.toussaint@yahoo.fr}
\maketitle
\begin{abstract}
In this paper, we present error-correcting codes which are the results of our research on the sub-exceeding functions. For a short and medium distance data transmission (wifi network, bluetooth, cable, ...), we see that these codes mentioned above present many advantages compared with the Hamming code which is a 1 correcting code.
\medskip

{\bf Math. Subject Classification:} 94B05; 94B35; 11B34

{\bf Key Words and Phrases:} Error correction code, encoding, decoding, sub-exceeding function 

\end{abstract}

\section{Introduction}
New information and communication technologies or NICTs  require today a norm increasingly strict in terms of quality of service. The diversity and the increasing volumes of data exchanged/processed also require increasingly fast and reliable systems.

In these constraints related to information processing, we  need to take into account the increased sensitivity of technologies in front of external disruptive sources. It's about especially to protect information against environmental damage during transmission.\\
In seeking to improve this quality of service in terms of communication, the present article lists the results which can be increase the reliability in the data transmission system. These are the fruits of the in-depth study on our two articles entitled: "Part of a set and sub-exceeding function (encoding and Decoding) \cite{Luc1}" in 2017 and "Encoding of Partition Set Using Sub-out Function \cite{Luc2} "in 2018. \\
Given a finite-dimensional set $ \Omega $ , the first article discusses the encoding (decoding) of any subset of $ \Omega $. For the second, it presents the encoding (decoding) of any partition of the set $\Omega $. \\
The purpose of this article is to build new error correcting code using the results in the two above-mentioned articles.
\\
Given an integer $ k $ such that $ k \geq 3$, we can build two error-correcting codes $\mathcal{L}_k$ and $\mathcal{L}_k ^{+}$ which are respectively $[2k,k]$ and $[3k,k]$ linear codes.\\
In the last section of this article, we give the decoding algorithm of these codes using Groebner basis.

\section{Preliminaries}
Let $n$ be a positive integer, $ [n] $ denotes the set of positive integers less or equal to  $ n $ and the zero element, i.e. $$ [n] = \left \lbrace 0, 1, \; 2, \; ... \; n \right \rbrace. $$
\subsection{The necessary ones on the study of sub-exceeding functions}
\begin{defn}(see \cite{Luc1})
Let $ n $ be a  positive integer and let $ f $ be a map from $ [n] $ to $ [n] $. This function $ f $  is said sub-exceeding if for all $ i $ in $ [n] $, we have $$ f (i) \leq i. $$
\end{defn}
We denote by $ \mathcal{F}_{n} $ the set of  all sub-exceeding functions on $ [n] $, i.e.
 \begin{equation}
 \mathcal{F}_{n}=\left\lbrace f:\;[n]\;\longrightarrow\;[n]\;\mid\;f(i)\leq i,\;\forall i\in [n]\right\rbrace.  \end{equation}
 
\begin{rem}
A sub-exceeding function $ f $ can be represented by the word of $ n + 1 $ alphabet $ f (0) f (1) f (2) ... f (n) $. So, we describe $ f $ by its images $ f = f (0) f (1) f (2) ... f (n) $. 
\end{rem}
\begin{defn}\label{def1}(see \cite{Luc1})
Let $ n $ and $ k $ be two integers such that $ 0 \leq k \leq n $. We define by $ \mathcal {H} ^ {k} _ {n} $ the subset of $ \mathcal {F} _ {n} $ such that
\begin{equation}
\mathcal{H}^{k}_{n}=\left\lbrace f\in\mathcal{F}_{n} \hspace{0.5 cm}|\hspace{0.5 cm}f(i)\leqslant f(i+1) \hspace{0.5 cm} \forall\; 0\leqslant i\leqslant n \hspace{0.5 cm} \text{and} \hspace{0.2 cm} Im(f)=[k] \right\rbrace \;. 
\end{equation} 
\end{defn}
Here, $\mathcal{H}^{k}_{n}$ is the set of all sub-exceeding function of $\mathcal{F}_{n}$ with a quasi-increasing sequence of image formed by all elements of  $[k]$. 
 
\begin{example}
Take $ n = $ 4 and $ k = $ 3. We find that the function $ f = 01123 $ is really in $ \mathcal {H} ^ {3}_{4} $ because $ (f (i)) _ {0\leq i\leq 4} $ is a quasi-increasing sequence formed by all the elements of $ [3] $. But if we take $ f = 01133 $, even if the sequence $ (f (i)) _ {0\leq i\leq 4} $ is quasi-increasing, $ f = 01133 \notin \mathcal {H} ^ {3 } _ {4} $ because $ Im (f) \neq [3] $ (without 2 among the $f(i)$).

\end{example}
Following the definition \ref {def1}, we denote by $ \mathcal {H} _ {n} $ the set defined as follows:
\begin{equation}
\mathcal{H}_{n}=\bigcup^{n}_{k=0}\mathcal{H}^{k}_{n}.
\end{equation}
\begin{thm}\label{Hn}( See \cite{Luc1})\\
Let $ n $ and $ k $ be two integers such that $ 0 \leq k \leq n $.

\begin{enumerate}
\item For $ k = 0 $, we always find that $ \mathcal {H} ^ {0} _ {n} $ is a set of singleton:
\begin{equation*}
\mathcal{H}^{0}_{n}\;=\;\left\lbrace f\;=\;000...00_{\text{n+1-terms}}\right\rbrace .
\end{equation*}
\item For $ k = n $, we also find that $ \mathcal {H} ^ {n} _ {n} $ is a set of singleton:
\begin{equation*}
\mathcal{H}^{n}_{n}\;=\;\left\lbrace f\;=\;0123....(n-1)(n)\right\rbrace .
\end{equation*}
\item For any integer $ k $ such that $ 0 <k <n $, we can construct all sub-exceeding functions of $ \mathcal {H} ^ {k} _ {n} $ as follows:
\begin{enumerate}
\item Take all the elements of $ \mathcal {H} ^ {k-1} _ {n-1} $ and add the integer $ k $ at the end,
\item Take all the elements of $ \mathcal {H} ^ {k} _ {n-1} $ and add the integer $ k $ at the end
\end{enumerate}
To better presentation of this construction, we adopt the following writing:
\begin{equation*}
\mathcal{H}^{k}_{n}\;=\;\left\lbrace \mathcal{H}^{k-1}_{n-1}\curvearrowleft k\right\rbrace \;\bigcup\;\left\lbrace \mathcal{H}^{k}_{n-1}\curvearrowleft k\right\rbrace .
\end{equation*}
Here, $ (*) \curvearrowleft k $ means that we add the integer $ k $ at the end of all element of $ (*) $.
\end{enumerate}
\end{thm}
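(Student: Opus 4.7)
The overall strategy is to treat each of the three bullets separately. Bullets (1) and (2) are essentially forced by the definition of $\mathcal{H}^k_n$, while (3) is a bijective/recursive argument based on inspecting the last coordinate $f(n)$.

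For bullet (1), if $f\in\mathcal{H}^{0}_{n}$ then $\mathrm{Im}(f)=[0]=\{0\}$, so $f(i)=0$ for every $i\in[n]$; this function is trivially sub-exceeding and quasi-increasing. Conversely the constant zero function clearly lies in $\mathcal{H}^{0}_{n}$, giving the singleton. For bullet (2), if $f\in\mathcal{H}^{n}_{n}$ then $\mathrm{Im}(f)=[n]$ has $n+1$ elements while $[n]$ itself has $n+1$ elements, so $f$ is a bijection. A quasi-increasing (i.e.\ non-decreasing) bijection $[n]\to[n]$ is necessarily the identity, and this identity is sub-exceeding since $f(i)=i\le i$. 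So $\mathcal{H}^n_n=\{0\,1\,2\,\dots\,n\}$.

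For bullet (3) the crux is the observation that every $f\in\mathcal{H}^{k}_{n}$ satisfies $f(n)=k$. Indeed $\mathrm{Im}(f)=[k]$ forces $k$ to be attained, and since $f$ is non-decreasing, the value $k$ can be attained only at the largest indices; in particular $f(n)=k$. Having fixed the last letter, consider the restriction $g:=f\!\restriction_{[n-1]}$, written $g=f(0)f(1)\cdots f(n-1)$. Then $g$ is sub-exceeding and quasi-increasing on $[n-1]$, and its image sits between $[k-1]$ and $[k]$: the values $0,1,\dots,k-1$ must all occur already among $g(0),\dots,g(n-1)$ (since they occur in $f$ strictly before the final $k$), while $\mathrm{Im}(g)\subseteq\mathrm{Im}(f)=[k]$. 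Thus $\mathrm{Im}(g)\in\{[k-1],[k]\}$, i.e.\ $g\in\mathcal{H}^{k-1}_{n-1}\cup\mathcal{H}^{k}_{n-1}$. This gives the inclusion $\mathcal{H}^{k}_{n}\subseteq\bigl\{\mathcal{H}^{k-1}_{n-1}\curvearrowleft k\bigr\}\cup\bigl\{\mathcal{H}^{k}_{n-1}\curvearrowleft k\bigr\}$.

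For the reverse inclusion I would check that appending $k$ preserves all the required properties. If $g\in\mathcal{H}^{k-1}_{n-1}$ then $g(n-1)=k-1\le k$, and if $g\in\mathcal{H}^{k}_{n-1}$ then $g(n-1)=k$, so in either case $f:=g\curvearrowleft k$ is non-decreasing; since $k\le n$ we also have $f(n)=k\le n$, so $f$ is sub-exceeding; finally $\mathrm{Im}(f)=\mathrm{Im}(g)\cup\{k\}$ equals $[k-1]\cup\{k\}=[k]$ in the first case and $[k]\cup\{k\}=[k]$ in the second. Hence $f\in\mathcal{H}^{k}_{n}$. The two pieces of the union are disjoint because the value $f(n-1)$ distinguishes them ($k-1$ versus $k$). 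The main (and really only) obstacle is the small bookkeeping that pins down $f(n)=k$ and then correctly classifies $\mathrm{Im}(g)$; once that is in place the rest is direct verification.
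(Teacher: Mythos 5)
Your argument is correct and complete. Note, however, that the paper itself gives no proof of this theorem: it is stated with a citation to \cite{Luc1} and the recursive construction is simply illustrated by the iteration table, so there is no in-text proof to compare yours against. On its own merits your proposal does exactly what is needed: bullets (1) and (2) are forced by $\mathrm{Im}(f)=[0]$ and by the fact that a non-decreasing surjection (hence bijection) of $[n]$ onto $[n]$ is the identity; for bullet (3) the two key observations are precisely the right ones, namely that monotonicity together with $\mathrm{Im}(f)=[k]$ forces $f(n)=k$, and that the truncation $g=f(0)\cdots f(n-1)$ then has $[k-1]\subseteq\mathrm{Im}(g)\subseteq[k]$, so $g$ lies in $\mathcal{H}^{k-1}_{n-1}$ or $\mathcal{H}^{k}_{n-1}$. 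You also verify the reverse inclusion (appending $k$ preserves monotonicity since $g(n-1)\in\{k-1,k\}$, preserves sub-exceedance since $k\le n$, and restores $\mathrm{Im}=[k]$) and the disjointness of the two pieces, which is what makes the cardinality recurrence in Proposition 2.1 follow. No gaps.
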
 
From this theorem, we have the iteration table of the elements of $ \mathcal {H} ^ {k} _ {n} $:

\begin{center}
\begin{tabular}{|c|c|c|c|c|c|c}
\cline{1-7}
$n\hspace{0.5 cm}\smallsetminus\hspace{0.5 cm}k$ & 0 & 1 & 2 & 3 & 4 & ...\\\hline
0 & \color{green}{0}  & & & & & \\\hline
1 & 0\color{green}{0} & 0\color{green}{1} & & & & \\\hline

 & & 00\color{green}{1} &  & & & \\
2 & 00\color{green}{0} & 01\color{green}{1} & 01\color{green}{2} & & & \\\hline
 
 &  & 000\color{green}{1} & 001\color{green}{2} & & & \\
3 & 000\color{green}{0} & 001\color{green}{1} & 011\color{green}{2} & 012\color{green}{3} & & \\
 &  & 011\color{green}{1} & 012\color{green}{2} &  & & \\\hline

 &  & 0000\color{green}{1} & 0001\color{green}{2} & 0012\color{green}{3} & & \\
 &  & 0001\color{green}{1} & 0011\color{green}{2} & 0112\color{green}{3} &  & \\
 & 0000\color{green}{0} & 0011\color{green}{1} & 0111\color{green}{2} & 0122\color{green}{3} & 0123\color{green}{4} & \\
4 &  & 0111\color{green}{1} & 0012\color{green}{2} & 0123\color{green}{3} &  & \\
 &  &  & 0112\color{green}{2} &  &  & \\
 &  &  & 0122\color{green}{2} &  &  & \\\hline
\vdots &  & & & & & \\
\end{tabular}
\end{center}

\begin{prop}\label{prop 2.1} See \cite{Luc1}\\
Let $n$ and $k$ be two integers such that $0\leqslant k\leqslant n$. So, we have the following relations
\begin{enumerate}
\item  $ \text{Card}\hspace{0.2 cm} \mathcal{H}^{k}_{n}= \text{Card}\hspace{0.2 cm} \mathcal{H}^{k-1}_{n-1} + \text{Card}\hspace{0.2 cm} \mathcal{H}^{k}_{n-1}$
\item $ \text{Card}\hspace{0.2 cm} \mathcal{H}^{k}_{n}= \left( \begin{matrix}
n\\
k
\end{matrix}\right) $ 
\item $ \text{Card}\hspace{0.2 cm} \mathcal{H}_{n}= 2^{n}$
\end{enumerate}
\end{prop}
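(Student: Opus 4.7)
The plan is to prove the three assertions in sequence, each building on the previous one.

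For the recurrence in item (1), I would invoke Theorem \ref{Hn} directly. The third part of that theorem writes $\mathcal{H}^{k}_{n}$ as the union $\{\mathcal{H}^{k-1}_{n-1} \curvearrowleft k\} \cup \{\mathcal{H}^{k}_{n-1} \curvearrowleft k\}$. The map $f \mapsto f \curvearrowleft k$ is trivially injective (it just appends a symbol), so each piece of the union has the same cardinality as the source. The only thing I need to check carefully is that the union is disjoint: in the first piece, the appended $k$ is the \emph{unique} occurrence of the value $k$ (since the underlying function lies in $\mathcal{H}^{k-1}_{n-1}$ and thus has image $[k-1]$), whereas in the second piece the value $k$ also appears at some position $i \leq n-1$ (since the underlying function already has $k$ in its image). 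Hence the two subsets cannot share an element, giving the additive formula.

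For item (2), I would argue by induction on $n$, using Pascal's rule $\binom{n}{k} = \binom{n-1}{k-1} + \binom{n-1}{k}$. The base cases $k=0$ and $k=n$ are handled by parts (1) and (2) of Theorem \ref{Hn}, which give $\text{Card}\,\mathcal{H}^{0}_{n} = 1 = \binom{n}{0}$ and $\text{Card}\,\mathcal{H}^{n}_{n} = 1 = \binom{n}{n}$. For $0 < k < n$, the induction hypothesis combined with item (1) yields
\begin{equation*}
\text{Card}\,\mathcal{H}^{k}_{n} = \binom{n-1}{k-1} + \binom{n-1}{k} = \binom{n}{k}.
\end{equation*}

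For item (3), I would observe that the definition $\mathcal{H}_{n} = \bigcup_{k=0}^{n} \mathcal{H}^{k}_{n}$ is a \emph{disjoint} union, because each $f \in \mathcal{H}_n$ has a uniquely determined image $\text{Im}(f) = [k]$ for exactly one $k \in \{0,1,\ldots,n\}$, and this $k$ classifies it. Summing the cardinalities found in item (2) and applying the binomial identity gives
\begin{equation*}
\text{Card}\,\mathcal{H}_{n} = \sum_{k=0}^{n} \binom{n}{k} = 2^{n}.
\end{equation*}

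I do not expect any serious obstacle: the only non-mechanical point is the disjointness check in item (1), where one must keep careful track of the fact that membership in $\mathcal{H}^{k-1}_{n-1}$ versus $\mathcal{H}^{k}_{n-1}$ is detectable from whether $k$ appears \emph{only} at the last position or also earlier. Once this is settled, items (2) and (3) are formal consequences of Pascal's rule and the binomial theorem respectively.
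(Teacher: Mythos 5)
Your proof is correct. The paper does not actually reprove this proposition --- it defers to \cite{Luc1} --- but the route you take (the recursive decomposition of Theorem \ref{Hn}, whose disjointness you rightly justify by whether the value $k$ occurs only at the last position or also earlier, followed by Pascal's rule and the binomial theorem for items (2) and (3)) is exactly the argument the paper signposts when it remarks that the resulting cardinality table ``is none other than the Pascal triangle.''
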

From this Proposition, we have the cardinal table of  $\mathcal{H}^{k}_{n}$ as
\begin{center}
\begin{tabular}{|c|c|c|c|c|c|c}
\cline{1-7}
$n\hspace{0.5 cm}\smallsetminus\hspace{0.5 cm}k$ & 0 & 1 & 2 & 3 & 4 & ...\\\hline
0 & \color{green}{1}  & & & & & \\\hline
1 & \color{green}{1} & \color{green}{1} & & & & \\\hline
2 & \color{green}{1} & 2 & \color{green}{1} & & & \\\hline
3 & \color{green}{1} & 3 & 3 & \color{green}{1} & & \\\hline
4 & \color{green}{1} & 4 & 6 & 4 & \color{green}{1} & \\\hline
\vdots &  & & & & & \\
\end{tabular}
\end{center}
Thus constructed, this table is none other than the Pascal triangle.
\subsection{Some notion about error-correcting codes}
When sending a message (data) through a transmission channel (by downloading data from Internet for example), errors can occur. Our goal is to be able to detect or correct these errors.
The principle of coding is as follows: after cutting our message into blocks of $ k $ bits, we will apply the same algorithm on each block:
\begin{itemize}
\item[-] by adding check bits at the end of all block
\item[-] or by completely modifying the blocks, but avoiding that two different blocks are transformed into a single block.
\end{itemize}
\begin{notation}
A block sequence of $ k $ bits will be indifferently called word or vector and will be denoted $ \;\;m_ {1} m_ {2} ... m_ {k} $.
\end{notation}

\begin{defn} {[A numerical code]}(see \cite{PA2})
\begin{itemize}
\item[-] A binary linear code is the image of an injective linear application $ \phi$ from $\left \lbrace 0,1 \right \rbrace ^ {k}$ to $\left \lbrace 0,1 \right \rbrace ^ {n} $. The parameter $ k $ is called the \textbf {dimension} and $ n $ is called the \textbf {length} of this code. So, we say that $ \phi $ is a code of parameters $[ n, k ]$  or $[n,k]$-linear code.
\item[-] The set $ \mathcal {C} = \left \lbrace \phi (m) \mid\; m \in \left \lbrace 0,1 \right \rbrace ^ {k} \right \rbrace $ is called the image of  $ \phi $ and the elements of $ \mathcal {C} $ are called the codewords(as opposed to the original elements of $ \left \lbrace 0,1 \right \rbrace ^ {k} $ which are called messages).
\end{itemize}
\end{defn}

\begin{rem}
A binary linear code $ \mathcal {C} $ of parameters $ [n, k] $ or $ [n,k] $ linear code is a subspace of $ \mathbb {F} _ {2} ^ {n} $ of dimension $k$.
\end{rem}
\begin{defn}(see \cite{PA2})
\begin{itemize}
\item[-]  The minimum distance $ d $ of a code $ \mathcal {C} $  is defined as,
\begin{equation}
d\;=\;min\left\lbrace d(x,y)\mid x\in\mathcal{C}; y\in \mathcal{C}; x\neq y\right\rbrace .
\end{equation}
Here $ d (x, y) $ presents the Hamming distance between $ x $ and $ y $.
\item[-]  A linear code with length $n$, dimension $k$ and minimum distance $d$ is called a $[n,k,d]$-linear code.
\end{itemize}
\end{defn}

\begin{defn}(see \cite{PA2})
\begin{itemize}
\item[-] We define by error detection capacity for the  code $ \mathcal {C} $, the maximum number of errors which we can be detected for a wrong code word. This integer is denoted by $ e_ {d} $.
\item[-] We call by correction capacity of the code $ \mathcal {C} $ the maximum number of errors which we can be corrected  for a wrong code word. This integer is denoted by $ e_ {c} $
\end{itemize}
\end{defn}
\begin{thm}(see \cite{PA1})\;
For an $[n, k, d]$-linear code, that is to say a code of length $ n $ and of dimension $ k $ with a minimal distance $ d $, we have:
\begin{equation}
e_{d}=d-1 \,\,\text{ and}\;\; e_{c}=\lfloor\dfrac{d-1}{2}\rfloor
\end{equation}
\end{thm}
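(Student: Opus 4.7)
The plan is to treat the two formulas separately, and for each establish both an achievability (the bound is attained) and an impossibility (the bound cannot be exceeded) direction, using only the definition of $d$ together with the triangle inequality for the Hamming distance.

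First I would handle the detection capacity $e_{d}=d-1$. For the lower bound, suppose the transmitted codeword is $c\in\mathcal{C}$ and an error pattern $e$ of weight $w$ with $1\leqslant w\leqslant d-1$ occurs; then $d(c,c+e)=w\leqslant d-1$, so by the definition of the minimum distance no other codeword lies that close to $c$, hence $c+e\notin\mathcal{C}$ and the error is detected. For the upper bound, the definition of $d$ guarantees the existence of two codewords $c_{1}\neq c_{2}$ with $d(c_{1},c_{2})=d$; the error $e=c_{2}-c_{1}$ has weight $d$ and carries $c_{1}$ to $c_{2}\in\mathcal{C}$, so this particular error of weight $d$ cannot be detected. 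This shows $e_{d}=d-1$.

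Next I would handle the correction capacity, setting $t=\lfloor(d-1)/2\rfloor$ so that $2t\leqslant d-1$. For the lower bound $e_{c}\geqslant t$, I would show the Hamming balls of radius $t$ centered at the codewords are pairwise disjoint: if some $y$ belonged to $B(c_{1},t)\cap B(c_{2},t)$ with $c_{1}\neq c_{2}$, the triangle inequality would give $d(c_{1},c_{2})\leqslant 2t\leqslant d-1<d$, contradicting the definition of $d$. Hence the nearest-codeword decoder correctly recovers $c$ from any $y=c+e$ with $\mathrm{wt}(e)\leqslant t$. For the upper bound, I would again pick $c_{1},c_{2}\in\mathcal{C}$ with $d(c_{1},c_{2})=d$ and construct an explicit $y$ obtained from $c_{1}$ by flipping $t+1$ of the $d$ coordinates on which $c_{1}$ and $c_{2}$ differ; then $d(c_{1},y)=t+1$ and $d(c_{2},y)=d-(t+1)\leqslant t+1$, so $y$ cannot be unambiguously decoded to $c_{1}$. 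Therefore $t+1$ errors cannot always be corrected, and $e_{c}=t$.

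The only step that deserves any real care is the converse for correction: one must verify the inequality $d-(t+1)\leqslant t+1$ (equivalently $d\leqslant 2t+2$), which follows from $t=\lfloor(d-1)/2\rfloor$ by splitting the parity of $d$. Everything else is a direct application of the triangle inequality and the definition of the minimum distance.
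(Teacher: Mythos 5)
Your proposal is correct, and in fact it supplies something the paper does not: the paper states this theorem without proof, merely citing its reference [PA1], so there is no internal argument to compare against. Your two-part argument (errors of weight at most $d-1$ cannot carry a codeword to another codeword, hence are detected, while a weight-$d$ error joining two codewords at distance exactly $d$ is undetectable; and disjointness of Hamming balls of radius $t=\lfloor (d-1)/2\rfloor$ via the triangle inequality for correction, with the explicit midpoint-type word $y$ showing $t+1$ errors can be ambiguous) is the standard and complete proof, and your verification that $d-(t+1)\leqslant t+1$ by parity of $d$ closes the only delicate step.
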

\begin{thm}(see \cite{PA1})
Let $ \mathcal {C} $ be a linear code, we have
\begin{equation*}
d\;=\;min\left\lbrace w(c)\mid c\in\mathcal{C} \text{ with } c\neq 0\right\rbrace .
\end{equation*}
Here $ w (c) $ is the weight of the  codeword $ c $, i.e. the number of non-zero bits of $ c $.
\end{thm}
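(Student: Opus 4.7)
The plan is to establish the equality $d = \min\{w(c) \mid c \in \mathcal{C},\; c \neq 0\}$ by proving the two inequalities separately, relying on just two ingredients: the fact that the zero vector lies in $\mathcal{C}$ (because $\mathcal{C}$ is a linear subspace of $\mathbb{F}_2^n$), and the translation identity $d(x,y) = w(x+y)$ valid over $\mathbb{F}_2$.

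For the ``$\leq$'' direction, I would take an arbitrary nonzero codeword $c \in \mathcal{C}$. Since $\mathcal{C}$ is linear, $0 \in \mathcal{C}$, and $c, 0$ form a pair of distinct codewords. By the very definition of the minimum distance, $d \leq d(c, 0)$. But $d(c, 0)$ is simply the number of coordinates in which $c$ differs from $0$, which is exactly the number of nonzero bits of $c$, namely $w(c)$. Thus $d \leq w(c)$ for every nonzero $c \in \mathcal{C}$, and passing to the minimum yields $d \leq \min\{w(c) \mid c \neq 0\}$.

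For the ``$\geq$'' direction, I would pick distinct codewords $x, y \in \mathcal{C}$ that realize the minimum distance, so that $d(x,y) = d$. Closure of $\mathcal{C}$ under addition gives $c := x + y \in \mathcal{C}$, and since $x \neq y$ we have $c \neq 0$. The crucial observation is that over $\mathbb{F}_2^n$, for each coordinate $i$ the equality $x_i + y_i \neq 0$ is equivalent to $x_i \neq y_i$, so the support of $x + y$ coincides with the set of coordinates where $x$ and $y$ disagree. This gives $d(x,y) = w(x+y) = w(c)$, whence $d = w(c) \geq \min\{w(c') \mid c' \in \mathcal{C},\; c' \neq 0\}$. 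Combining the two inequalities produces the desired equality.

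The argument is almost a one-liner once the subspace property is invoked, so I do not anticipate a substantial obstacle. The only point I would be careful to spell out is the identity $d(x,y) = w(x+y)$, since the whole proof hinges on it; the rest is bookkeeping around the definitions of $d$ and $w$.
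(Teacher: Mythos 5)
Your proof is correct. The paper itself offers no proof of this statement --- it is quoted as a known result with a citation to the reference on error-correcting codes --- so there is nothing internal to compare against; your two-inequality argument (using $0 \in \mathcal{C}$ for the ``$\leq$'' direction and closure under addition together with the identity $d(x,y) = w(x+y)$ for the ``$\geq$'' direction) is the standard textbook proof and is complete, modulo the harmless implicit assumption that $\mathcal{C}$ contains at least one nonzero codeword so that both minima are taken over nonempty sets.
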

\begin{thm} (see \cite{PA1})
Let $ \mathcal {C} $ be a $[n, k, d]$-linear code. There is a linear application $ \phi $ from $ \mathbb {F} _ {2} ^ {k} $ to $ \mathbb {F} _ {2} ^ {n} $ such that the image of $ \phi $ is $ \mathcal {C} $. So, we can represent the application $ \phi $ by $ \phi (m) = m \times G $ where $ G $ is the representative matrix of $ \phi $ by the canonical bases of $ \mathbb {F} _ {2} ^ {k} $. \\
Here $G$ is called the generating matrix of $ \mathcal {C} $.
\end{thm}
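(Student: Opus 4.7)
The plan is to treat this as a direct consequence of the fact, recorded in the preceding remark, that a $[n,k]$-linear code $\mathcal{C}$ is a $k$-dimensional subspace of $\mathbb{F}_2^n$. Since every finite-dimensional subspace admits a basis, I would start by choosing a basis $\{c_1,c_2,\ldots,c_k\}$ of $\mathcal{C}$ (here $c_i\in\mathbb{F}_2^n$).

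Next, I would define $\phi\colon\mathbb{F}_2^k\longrightarrow\mathbb{F}_2^n$ on the canonical basis $(e_1,\ldots,e_k)$ of $\mathbb{F}_2^k$ by $\phi(e_i)=c_i$, and extend by $\mathbb{F}_2$-linearity, i.e.\ for any $m=(m_1,\ldots,m_k)\in\mathbb{F}_2^k$ set
\begin{equation*}
\phi(m)\;=\;\sum_{i=1}^{k} m_i\, c_i.
\end{equation*}
The image of $\phi$ is exactly the span of $\{c_1,\ldots,c_k\}$, which is $\mathcal{C}$; and since the $c_i$ form a basis, the only $m$ sent to $0$ is $m=0$, so $\phi$ is injective. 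This already gives the first assertion.

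For the matrix form, I would let $G$ be the $k\times n$ matrix over $\mathbb{F}_2$ whose $i$-th row is $c_i$. Writing $m$ as a row vector and computing $m\times G$ yields $\sum_{i=1}^{k} m_i\, c_i = \phi(m)$, which is precisely the representation of $\phi$ in the canonical bases. Hence $G$ is a generating matrix for $\mathcal{C}$.

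There is really no hard step here: the content is just the standard correspondence between linear maps and matrices, applied to a $k$-dimensional subspace of $\mathbb{F}_2^n$. The only point that needs a sentence of justification is that $\mathcal{C}$ admits a basis of size $k$, but this is immediate from $\dim\mathcal{C}=k$ as stated in the preceding remark. The proof therefore reduces to writing down the three short paragraphs above.
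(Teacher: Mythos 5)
Your proof is correct and complete: choosing a basis $\{c_1,\ldots,c_k\}$ of the $k$-dimensional subspace $\mathcal{C}\subseteq\mathbb{F}_2^n$, defining $\phi(e_i)=c_i$, and taking $G$ to be the matrix with rows $c_i$ is exactly the standard argument, and your injectivity and image computations are right. Note that the paper itself gives no proof of this statement (it is quoted from the reference \cite{PA1} as a preliminary), so there is nothing to compare against; your write-up supplies the routine linear-algebra justification that the paper omits.
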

\begin{defn}(see \cite{PA1})
A  $[n, k, d]$-linear code $C$ is called systematic if the encoding consists in adding $ n-k $ bits at the end of the message. For a linear code, The generating matrix $ G $  is  of the form
$\left(\begin{array}{cc}
I_{k} & G'\\
\end{array}\right) $, where $ I_{k} $ is the unit matrix with $ k $ rows and $ k $ columns and $ G '$ a matrix with $ k $ rows and $ n-k $ columns.
\end{defn}
\begin{defn}(see \cite{PA2})
A parity check matrix for a binary $[n, k, d]$-linear code $C$ is the matrix $ H \in \mathcal {M} _ {n, n-k}(\mathbb{F}_{2}) $ such that $H\times\;^{t}c=0$ if and only if $c\in C$.
\end{defn}
\section{Main result: \\ error-correcting code from the study on the sub-exceeding function }
In this section, we present our linear error-correcting code from sub-exceeding function.\\

Recall that for a positive integer $ n $, a function $ f $  from $ [n] $ to $ [n] $ is said to be sub-exceeding if for any integer $ i $ in $ [n] $, we always have the inequality $ f (i) \leq i $.

Thus, the sub-exceeding term amounts to saying that the image of an integer $ i $ by an application $ f $ is always an integer smaller or equal to this one.

\begin{thm}
Let $ k $ be a positive integer and let $ f $ be an application from $[k] $ to $ \mathbb {F} ^ {k+1} _ {2} $. Then the application $ f $  is a sub-exceeding function if and only if $f(0)=0$.
\end{thm}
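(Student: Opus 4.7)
The plan is to check each direction of the equivalence by unpacking the sub-exceeding condition index by index, using the fact that each image $f(i)$ is a single bit, so that the output tuple $(f(0), f(1), \ldots, f(k))$ lies in $\mathbb{F}_2^{k+1}$.

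First I would handle the forward direction. If $f$ is sub-exceeding, then by the definition of $\mathcal{F}_n$ given in Section 2.1 we have $f(i) \leq i$ for every $i \in [k]$. Specializing to $i=0$ gives $f(0) \leq 0$, and since $f(0)$ lies in $\mathbb{F}_2 = \{0,1\}$, this forces $f(0)=0$.

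For the converse, I would assume $f(0) = 0$ and verify $f(i) \leq i$ for all $i \in [k]$. The case $i=0$ is exactly the hypothesis. For every $i \geq 1$ the inequality $f(i) \leq i$ is automatic: since $f(i) \in \{0,1\}$ we have $f(i) \leq 1 \leq i$, and this is precisely the sub-exceeding condition.

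The only subtlety, and what I take to be the main point worth pinning down, is the reading of the codomain: $\mathbb{F}_2^{k+1}$ is really a bookkeeping device for the $k+1$ outputs of $f$, so that each individual value $f(i)$ is an element of $\mathbb{F}_2 = \{0,1\}$. Once this identification is clear, the argument reduces to the two-case check above, and I do not anticipate any real obstacle; the statement is essentially recording that among binary-valued words of length $k+1$, the sub-exceeding condition collapses to the single constraint at the leading coordinate — a fact that will presumably free up $k$ coordinates for encoding messages in the construction of $\mathcal{L}_k$ and $\mathcal{L}_k^{+}$ in the sequel.
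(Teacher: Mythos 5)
Your proof is correct and follows essentially the same route as the paper: the values $f(i)$ lie in $\{0,1\}$, so $f(i)\leq i$ is automatic for $i\geq 1$ and the whole condition collapses to $f(0)=0$. You are in fact slightly more complete than the paper, which only spells out the "if" direction and leaves the (trivial) "only if" direction implicit.
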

This theorem tells us that all message of $ k $ bits on $ \mathbb {F} _ {2} $ which begins with 0 is a sub-exceeding function.
\begin{proof}
We say that the image of an integer $i$ in $[k]$ by the application $ f $ is always equal to 0 or 1. Thus, by the condition $f(0)=0$, we have $ f (i) \leq i $ for all $i$. So, $f$ is a sub-exceeding function.
\end{proof}
Now, let's examine the subset $ \mathcal {H} _ {k} $ for the set of sub-exceeding functions in all application from $ [k] $ in $ \mathbb {F} ^ {k+1} _ {2} $. That is to say the subset $ \mathcal {H} _ {k} $ for the set of $ k $ bits messages on $ \mathbb {F} _ {2} $. \\
Referring to the theorem (\ref {Hn}), we can have all the elements of $ \mathcal {H} _ {k} $ (see the table below). Moreover, from the proposition (\ref {prop 2.1}), we find

\begin{equation}
\text{Card } (\mathcal{H}^{0}_{k}) = 1 \text{ and that  Card}(\mathcal{H}^{1}_{k}) = k .
\end{equation}
The table below shows the elements of $ \mathcal {H} ^ {i} _ {k} $ for each value of $ i \in \left \lbrace 0,1 \right \rbrace $ and some integer $ k $.
\begin{equation*}
\begin{array}{r|r|r}
 n\;\;\backslash\;\;k& 0 & 1\\\hline
  0 & 0 & \\\hline
  1 & 00 & 01\\\hline
  2 & 000 & 001 \\
  & & 011 \\\hline
  & & 0001\\  
  3 & 0000 & 0011 \\    
  & & 0111\\\hline
   & & 00001\\
 4  & 00000 & 00011\\
   & & 00111\\
   & & 01111\\\hline
      &  & 000001\\
      & & 000011\\
    5  & 000000& 000111 \\
      & & 001111\\
      & & 011111\\\hline 
     \vdots  & & \\   
\end{array}
\end{equation*}
\begin{prop}
Let $ k $ be a positive integer and denote by $ T_ {k} $ the matrix of $ k + 1 $ row and $ k $ column such that 
\begin{equation}
T_{k}=\left( \begin{array}{ccccccc}
0 & 0 & 0 & ...& 0  & 0 & 1 \\
0 & 0 & 0 & ...& 0  & 1 & 1 \\
0 & 0 & 0 & ...& 1  & 1 & 0 \\
\vdots & \vdots  & \vdots & \diagup& \diagup  & \vdots & \vdots \\
0 & 0 & 1  & 1 & 0 & ... & 0 \\
0 & 1 & 1 & 0 & 0  & ... & 0 \\
1 & 1 & 0 & 0 & 0  & ... & 0 \\
0 & 1 & 1 & 1 & 1  & ... & 1 \\

\end{array}\right). 
\end{equation}
Reminding that $ \mathcal {H} ^ {1} _ {k} $ is the set of sub-exceeding functions $ f_i $ of length $ k + 1 $ such that
\begin{equation*}
\begin{array}{ccc}
f_{i}=000 & ... & 0\underbrace{11...1} \\
 &  &\;\; i-times
\end{array}, \text{ with } i\in \left\lbrace 1,... k\right\rbrace .
\end{equation*}
Then, the product $ f_ {i} \times T_ {k} $ gives the word $ g_ {i} $ such that

\begin{equation}\label{Gk+1}
\left\lbrace  \begin{array}{cc}
 g_{1} = & 0111...111 \\
 g_{2} = & 1011...111  \\
 g_{3} = & 1101...111 \\
 \vdots & \vdots \\
 g_{k-1} = & 1111...101 \\
 g_{k} = & 1111...110 \\
 
 \end{array}\right\rbrace .
\end{equation}
\end{prop}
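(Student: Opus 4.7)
The plan is to recast each row of $T_k$ as a sum of standard basis vectors in $\mathbb{F}_2^k$, read off which rows the vector $f_i$ selects, and then use a telescoping cancellation in characteristic $2$ to extract $g_i$.

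First I would index the rows of $T_k$ by $0,1,\dots,k$ and its columns by $0,1,\dots,k-1$, and write each row in terms of the canonical basis $e_0,\dots,e_{k-1}$ of $\mathbb{F}_2^k$. Reading the display line by line gives $r_0 = e_{k-1}$ (a single $1$ in the last column), $r_t = e_{k-1-t} + e_{k-t}$ for $1\le t\le k-1$ (a pair of adjacent $1$s that marches one step leftwards as $t$ grows), and the special last row $r_k = e_1 + e_2 + \cdots + e_{k-1}$.

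Since $f_i$ has its $i$ ones at positions $k+1-i,\dots,k$, the product $f_i\times T_k$ is simply the $\mathbb{F}_2$-sum $r_{k+1-i}+\cdots+r_k$. I would split off $r_k$ and, after the substitution $s = k-t$, rewrite the middle contribution as
\begin{equation*}
\sum_{t=k+1-i}^{k-1} r_t \;=\; \sum_{s=1}^{i-1}(e_{s-1}+e_s).
\end{equation*}
In $\mathbb{F}_2$ this sum telescopes: every interior $e_s$ with $1\le s\le i-2$ appears twice and vanishes, leaving exactly $e_0 + e_{i-1}$.

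Adding back $r_k$ yields $f_i\times T_k = e_0 + e_{i-1} + (e_1+\cdots+e_{k-1})$. Because $2\le i\le k$ forces $1\le i-1\le k-1$, the summand $e_{i-1}$ cancels the matching term of $r_k$, so the result has a $0$ at position $i-1$ and $1$s everywhere else, matching $g_i$ as listed in (\ref{Gk+1}). The edge case $i=1$ is handled separately: the middle sum is empty, so $f_1\times T_k = r_k = g_1$, whose unique zero sits at position $i-1=0$, again in agreement with (\ref{Gk+1}). The main (modest) obstacle is the index bookkeeping needed to justify the telescoping range; once the three row shapes above are pinned down, the cancellation and one final absorption into $r_k$ close the argument.
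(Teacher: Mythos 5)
Your argument is correct, and it in fact supplies something the paper does not: the paper states this proposition without proof, merely illustrating it with the explicit matrices $T_3$ and $T_4$, so there is no internal proof to compare against. Your row decomposition of $T_k$ --- $r_0=e_{k-1}$, $r_t=e_{k-1-t}+e_{k-t}$ for $1\le t\le k-1$, and $r_k=e_1+e_2+\cdots+e_{k-1}$ --- matches the displayed matrix and the $k=3,4$ examples; the identification of $f_i\times T_k$ with $r_{k+1-i}+\cdots+r_k$ is right, since the ones of $f_i$ occupy positions $k+1-i,\dots,k$ of the length-$(k+1)$ word; and the telescoping
\begin{equation*}
\sum_{s=1}^{i-1}\left(e_{s-1}+e_s\right)=e_0+e_{i-1}
\end{equation*}
over $\mathbb{F}_2$, followed by the cancellation of $e_{i-1}$ against the matching term of $r_k$ for $2\le i\le k$, yields exactly the all-ones word with a single $0$ in position $i-1$, which is $g_i$; your separate handling of $i=1$ is also correct. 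The one caveat worth flagging is that the paper never defines $T_k$ by a closed formula, so your three row shapes are an interpretation of a loosely typeset display; you should state explicitly that this row description is being taken as the definition of $T_k$, after which the rest of the argument is airtight and more informative than the paper's example-checking.
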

\begin{notation}
Now let's denote by $ G_ {k} $ the matrix
\begin{equation}
G_{k}=\left(   \begin{array}{c}
 g_{1} \\
 g_{2}  \\
 g_{3}  \\
 \vdots \\
 g_{k} \\
  \end{array}\right)  =  \left(   \begin{array}{cc}
   0111...111 \\
   1011...111  \\
   1101...111 \\
    \vdots \\
   1111...101 \\
   1111...110 \\
      \end{array}\right).
 \end{equation}
\end{notation}

\begin{example}
For $ k = $ 3, we have:
\begin{center}
$G_{3}\left\lbrace  \begin{array}{cc}
 g_{1} = & 011 \\
 g_{2} = & 101  \\
 g_{3} = & 110 \\
 \end{array}\right\rbrace, T_{3}=\left( \begin{array}{ccc}
0 & 0 & 1 \\
0 & 1 & 1 \\
1 & 1 & 0 \\
0 & 1 & 1
\end{array}\right) \text{ where } \mathcal{H}^{1}_{3}\left\lbrace  \begin{array}{cc}
 f_{1} = & 0001 \\
 f_{2} = & 0011  \\
 f_{3} = & 0111 \\
 \end{array}\right\rbrace $
\end{center}
For $k=4$, we have:

\begin{center}
$G_{4}\left\lbrace  \begin{array}{cc}
 g_{1} = & 0111 \\
 g_{2} = & 1011  \\
 g_{3} = & 1101 \\
 g_{4} = & 1110 \\
 \end{array}\right\rbrace, T_{4}=\left( \begin{array}{cccc}
0 & 0 & 0 & 1  \\
0 & 0 & 1 & 1  \\
0 & 1 & 1 & 0  \\
1 & 1 & 0 & 0  \\
0 & 1 & 1 & 1
\end{array}\right) \text{ where } \mathcal{H}^{1}_{4}\left\lbrace  \begin{array}{cc}
 f_{1} = & 00001 \\
 f_{2} = & 00011 \\
 f_{3} = & 00111 \\
 f_{4} = & 01111 \\
 \end{array}\right\rbrace $
\end{center}
\end{example}
\begin{thm}\label{thmpric1}\textbf{(Main theorem: The codes $\mathcal{L}_k$)}\\
Let $ k $ be a positive integer and let $ \psi $ be the linear application from $ \mathbb {F} _ {2} ^ {k} $ to $ \mathbb {F} _ {2} ^ {2k} $ such that
\begin{equation}
\begin{array}{cccc}
\psi\;: & \mathbb{F}_{2} ^{k} & \longrightarrow & \mathbb{F}_{2} ^{2k} \\
 & m & \longmapsto & \psi(m)= m\;\times\;G
\end{array} 
\end{equation}
where $ m $ is the message of $ k $ bits such that $ m = m_ {1} m_ {2} ... m_ {k} $ and $ G $ is the matrix
$ G =\left(\begin{array}{cc}
I_{k} & G_{k}\\
\end{array}\right)$ where $ G_ {k} $ is the matrix defined in the equation (\ref{Gk+1}). \\
Thus, the application $ \psi $  forms a systematic $[2k,k]$-linear error-correcting code denoted by $\mathcal{L}_k$ . The minimum distance of $\mathcal{L}_3$ is 3 and for $k\geq 4$ the minimum distance of  $\mathcal{L}_k$ is 4. 
\end{thm}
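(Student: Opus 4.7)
The plan is to first dispatch the easy structural claims and then reduce the minimum-distance claim to a single weight computation.

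First I would observe that $G=(I_k\mid G_k)$ has the identity block on the left, so $\psi$ is automatically injective, its image $\mathcal{L}_k$ is a $k$-dimensional subspace of $\mathbb{F}_2^{2k}$, and the code is systematic by definition (the first $k$ coordinates of $\psi(m)$ are $m$ itself, the remaining $k$ are the redundancy $mG_k$). This settles that $\mathcal{L}_k$ is a systematic $[2k,k]$ linear code.

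The heart of the proof is the minimum distance. Since $\mathcal{L}_k$ is linear, it suffices to minimise the weight $w\bigl(\psi(m)\bigr)=w(m)+w(mG_k)$ over nonzero $m\in\mathbb{F}_2^{k}$. I would exploit the explicit shape of $G_k$: every row $g_i$ of $G_k$ is the all-ones vector with a single $0$ in position $i$, equivalently the $j$-th column of $G_k$ has entries $(G_k)_{ij}=1-\delta_{ij}$. Consequently, for $m=(m_1,\ldots,m_k)$,
\begin{equation*}
(mG_k)_j \;=\; \sum_{i=1}^{k} m_i(1-\delta_{ij}) \;=\; w(m)-m_j \pmod 2.
\end{equation*}
The key dichotomy then falls out: if $w(m)$ is even, $mG_k=m$ and $w(\psi(m))=2w(m)$; if $w(m)$ is odd, $mG_k$ is the bitwise complement of $m$ and $w(\psi(m))=w(m)+(k-w(m))=k$.

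Now I minimise over $m\neq 0$. The even-weight case contributes $2w(m)\geq 4$, with equality achieved by any $m$ of weight $2$ (e.g. $m=110\cdots0$). The odd-weight case contributes exactly $k$, achieved for instance by $m=10\cdots0$. Hence the minimum distance equals $\min(4,k)$, which yields $d=3$ when $k=3$ and $d=4$ when $k\geq 4$, as claimed. The only step I expect to require any care is the column-structure identity $(G_k)_{ij}=1-\delta_{ij}$ (which must be read off from the definition of $G_k$ and is slightly obscured by the arrow notation in~\eqref{Gk+1}); once that is in place the whole argument is a one-line parity computation followed by inspection of two cases.
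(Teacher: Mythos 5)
Your proposal is correct and follows essentially the same route as the paper: both establish injectivity and systematicity from the identity block, derive the dichotomy $mG_k=m$ for $w(m)$ even and $mG_k=\overline{m}$ for $w(m)$ odd from the fact that each check bit is the sum of all message bits but one, and then read off $d=\min(4,k)$ from the resulting weights $2w(m)$ and $k$. Your formulation via $(G_k)_{ij}=1-\delta_{ij}$ is just a cleaner packaging of the paper's computation $c_2[i]=\sum_{j\neq i}m_j$.
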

\begin{proof}
First, since $ G = \left (\begin {array} {cc}
I_ {k} & G_ {k} \\
\end {array} \right) $ is a matrix of $ k $ rows and $ 2k $ columns whose rows are linearly independent vectors, so the  application $ \psi $ is injective from $ \mathbb {F} _ {2} ^ {k} $ to $ \mathbb {F} _ {2} ^ {2k} $. Thus, $ \psi (\mathbb {F} _ {2} ^ {k}) $ is a vector space over $ \mathbb {F} _ {2} $ of dimension $ k $.
Then $ \psi $ forms a systematic linear error-correcting code of dimension $ k $ and length $ 2k $. \\
Now, let $ m $ be  the message such as $ m = m_ {1} \; m_ {2} ... m_ {k} $ and note by $ c $ its image by the application $ \psi $.
$$ c = \psi (m) = m \, \times \; G $$
Since $ \psi $ is a systematic code, a codeword $ c $ of length $ 2k $ can be distributed  into two vector $ c_ {1} $ and $ c_ {2} $. That is to say , $ c = \; c_ {1} \; c_ {2} $. Here, the vector $ c_1 $ is the original message ($ c_1 = m $) and $ c_2 $ is the vector (control bits) such that $ c_2 = m \; \times \; G_ {k} $. \\
So, for any integer $ i $ in $ \left \lbrace 1,2, ..., k \right \rbrace $, we have $$ c_ {2} [i] = \sum ^ {k} _ {j = 1, j \neq i} m_ {j}. $$
So, two cases are possible:
\begin{itemize}
\item[-] If the weight of $ m $ is even
\begin{equation*}
\left\lbrace \begin{array}{ccc}
\text{ and that } \;m_i = 0 & \Rightarrow & c_{2}[i]=\sum^{k}_{j=1,j\neq i} m_{j}=0.\\
\text{and if } \;m_i = 1 & \Rightarrow & c_{2}[i]=\sum^{k}_{j=1,j\neq i} m_{j}=1.
\end{array}\right. 
\end{equation*}
In this case, the code word $ c $ is: $ c = m \; m $. \\(Ex: for $ k = 6 $, if $ m = 011101 $ we have $ c = 011101 \; 011101 $ )
\item[-] If the weight of $ m $ is odd
\begin{equation*}
\left\lbrace \begin{array}{ccc}
\text{and that } \;m_i = 0 & \Rightarrow & c_{2}[i]=\sum^{k}_{j=1,j\neq i} m_{j}=1.\\
\text{and if }\; m_i = 1 & \Rightarrow & c_{2}[i]=\sum^{k}_{j=1,j\neq i} m_{j}=0.
\end{array}\right. 
\end{equation*}
In this case, the  code word $ c $ is: $ c = m \; \overline {m} $ where $ \overline {m} $ is the opposite of $ m $. \\(Ex: for $ k = 7 $, if $ m = 0000111 $ we have $ c = 0000111 \; 1111000 $)
\end{itemize}
Now, 
\begin{enumerate}
\item Take $k=3$,
\begin{equation}
 \begin{array}{ccc}
 \text{ if } w(m)=1 & \longrightarrow & w(c)= 3,\\
 \text{ if } w(m)=2 & \longrightarrow & w(c)= 4,\\
 \text{ if } w(m)=3 & \longrightarrow & w(c)= 3.\\
 \end{array}
 \end{equation}
Thus, the minimum distance for the code   $\mathcal{L}_3$  is 3.

\item for $k\geq 4$,
\begin{equation}
 \begin{array}{ccc}
 \text{ if } w(m)=1 & \longrightarrow & w(c)= k,\\
 \text{ if } w(m)=2 & \longrightarrow & w(c)= 4,\\
 \text{ if } w(m)=3 & \longrightarrow & w(c)= k,\\
  & \vdots  & \\
  \text{ if } w(m)=p \;(\text{even}) & \longrightarrow & w(c) = 2p,\\
   \text{if } w(m)=q \;(\text{odd}) & \longrightarrow & w(c) = k.\\
 \end{array}
 \end{equation}
Thus, the minimum distance for the code  $\mathcal{L}_k$ is 4.
\end{enumerate}
\end{proof}
\begin{example}
For $k=3$, from the main theorem (\ref * {thmpric1}), we have 
\begin{equation*}
 \mathcal{L}_3 =\left\lbrace \begin{array}{c}
 000000 \\
 001110 \\
 010101 \\
 100011 \\
 011011 \\
 101101 \\
 110110 \\
 111000 \\
  \end{array}\right\rbrace,\;\; G= \left(\begin{array}{cccccc}
    1 & 0 & 0 & 0 & 1 & 1\\
    0 & 1 & 0 & 1 & 0 & 1\\
    0 & 0 & 1 & 1 & 1 & 0\\
   \end{array}\right).\; \text{Initial messages:}
   \left\lbrace \begin{array}{c}
    000 \\
    001 \\
    010 \\
    100 \\
    011 \\
    101 \\
    110 \\
    111 \\
     \end{array}\right\rbrace.
\end{equation*}

\end{example}
\begin{example}
For $k=4$,from the main theorem (\ref*{thmpric1}), we have
 \begin{equation}\label{L4}
 G= \left(\begin{array}{cccccccc}
     1 & 0 & 0 & 0 & 0 & 1 & 1 & 1\\
     0 & 1 & 0 & 0 & 1 & 0 & 1 & 1\\
     0 & 0 & 1 & 0 & 1 & 1 & 0 & 1\\
     0 & 0 & 0 & 1 & 1 & 1 & 1 & 0\\
    \end{array}\right) 
    \text{ and } 
 \mathcal{L}_4 =\begin{array}{c}
 0000 0000 \\
 0001 1110 \\
 0010 1101 \\
 0100 1011 \\
 1000 0111 \\
  \end{array}
  \begin{array}{c}
   0011 0011 \\
   0101 0101 \\
   1001 1001 \\
   0110 0110 \\
   1010 1010 \\
   1100 1100 \\
    \end{array}
    \begin{array}{c}
     0111 1000 \\
     1011 0100 \\
     1101 0010 \\
     1110 0001 \\
     1111 1111
      \end{array}
\end{equation} 
  \end{example}
\begin{thm} \label{thmprinc2}(\textbf{The code  $\mathcal{L}_k ^{+}$} ).\\
Let $ k $ be an integer such that $ k \geq 4 $ and let's take the system $ \left \lbrace e '_ {1}, e' _ {2}, ..., e '_ {k} \right \rbrace $ where $ e '_ {i} = \psi (g_ {i}) $ (see the equation \ref {Gk+1}). So, we can build a $[3k,k]$-linear systematic code  with generator matrix 
 \begin{equation}
 G'= \left( \begin{array}{cc}
 & e'_{1}\\
 I_k & \vdots \\
  & e'_{k}
 \end{array}\right)
 \end{equation}
 witch is denoted by $\mathcal{L}_k ^{+}$.\\
The code $\mathcal{L}_4 ^{+}$ has minimum distance 5 and for $ k \geq 5 $, the code $\mathcal{L}_k ^{+}$ has minimum distance 6.
The generating matrix $ G $ of this code has the form
 \begin{equation}
  G'= \left( \begin{array}{ccc}
  I_k & G_k & I_k \\
  \end{array}\right).
 \end{equation}
\end{thm}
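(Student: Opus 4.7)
The plan is to proceed in three stages: first derive the closed-form block structure $G'=(I_k\mid G_k\mid I_k)$, then deduce that the associated encoding $m\mapsto mG'$ is injective and systematic, and finally enumerate codeword weights to pin down the minimum distance.

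First I would unpack the definition $e'_i=\psi(g_i)$. Since $\psi$ has generator matrix $G=(I_k\mid G_k)$, one immediately has $e'_i=g_i\cdot G=(g_i,\,g_iG_k)$, so stacking the rows $(e_i,e'_i)$ yields the block decomposition $G'=(I_k\mid G_k\mid M)$, where the $i$-th row of $M$ is $g_iG_k$. The first nontrivial task is therefore to identify this third block. Writing $g_i=\mathbf{1}+e_i$ over $\mathbb{F}_2$ and decomposing $G_k=J_k+I_k$ with $J_k$ the all-ones matrix, a short computation using $\mathbf{1}J_k=k\mathbf{1}$, $\mathbf{1}I_k=\mathbf{1}$, $e_iJ_k=\mathbf{1}$ and $e_iI_k=e_i$ should yield $g_iG_k=e_i$; hence $M=I_k$ and $G'=(I_k\mid G_k\mid I_k)$.

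Because the leftmost $k\times k$ block of $G'$ is $I_k$, its rows are automatically linearly independent, so $m\mapsto mG'$ is injective, and the block shape makes the code systematic; this yields $[3k,k]$-linearity. For the minimum distance, I would write a codeword as $c=mG'=(m,\,mG_k,\,m)$ and invoke the identity already used in the proof of Theorem \ref{thmpric1}, namely $mG_k=m$ when $w(m)$ is even and $mG_k=\overline{m}$ when $w(m)$ is odd. The weight of $c$ then splits into two regimes: when $w(m)$ is even and nonzero, the codeword is $(m,m,m)$ of weight $3w(m)\geq 6$; when $w(m)$ is odd, the codeword is $(m,\overline{m},m)$ of weight $2w(m)+(k-w(m))=k+w(m)\geq k+1$. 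Taking the minimum over all nonzero $m$ gives $d=\min(6,k+1)$, which equals $5$ for $k=4$ and $6$ for $k\geq 5$.

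The main obstacle I anticipate is the $\mathbb{F}_2$-identity $g_iG_k=e_i$: although brief, it hinges on the parity of the row sums of $G_k$, so the argument must be arranged carefully to sustain the uniform statement $G'=(I_k\mid G_k\mid I_k)$ for all $k\geq 4$. Once that identity is in hand, the injectivity and systematic claims are immediate, and the minimum-distance determination is a direct weight enumeration patterned after the proof already given for $\mathcal{L}_k$.
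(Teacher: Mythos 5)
Your derivation of the block structure contains a genuine error: the identity $g_iG_k=e_i$ is false when $k$ is odd. Your own list of auxiliary facts gives, over $\mathbb{F}_2$, $g_iG_k=(\mathbf{1}+e_i)(J_k+I_k)=k\mathbf{1}+e_i$ (where $J_k$ is the all-ones matrix), and this equals $e_i$ only for even $k$; for odd $k$ it equals $\mathbf{1}+e_i=g_i$. Equivalently, $w(g_i)=k-1$, so by the parity rule you quote from Theorem \ref{thmpric1}, $\psi(g_i)=(g_i,\overline{g_i})=(g_i,e_i)$ when $k$ is even but $\psi(g_i)=(g_i,g_i)$ when $k$ is odd; concretely $g_1G_5=01111\neq e_1$. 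Hence stacking the rows $(e_i,\psi(g_i))$ yields $(I_k\mid G_k\mid I_k)$ only for even $k$, and $(I_k\mid G_k\mid G_k)$ for odd $k$ --- and the distinction matters, since for the latter matrix the message $m=\mathbf{1}$ (odd weight $k$) encodes to $(\mathbf{1},\mathbf{0},\mathbf{0})$ of weight $k$, so $\mathcal{L}_5^{+}$ would have minimum distance $5$, not $6$. To be fair, this inconsistency is already present in the theorem statement, whose two displayed formulas for $G'$ agree only for even $k$; the paper's own $\mathcal{L}_5^{+}$ example silently uses $(I_5\mid G_5\mid I_5)$ while mislabelling its rows (e.g.\ $01111\,10000\notin\mathcal{L}_5$, since $01111$ has even weight). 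But a proof cannot derive an identity that is false: you must either restrict the claim $g_iG_k=e_i$ to even $k$, or simply take $G'=(I_k\mid G_k\mid I_k)$ as the definition of $\mathcal{L}_k^{+}$ and drop the $e_i'=\psi(g_i)$ description for odd $k$.

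Once $G'=(I_k\mid G_k\mid I_k)$ is taken as given, the rest of your argument is correct and matches the substance of the paper's proof while being more rigorous: injectivity and systematicity from the leading identity block, then the weight enumeration $w(c)=3w(m)$ for $w(m)$ even and $w(c)=2w(m)+(k-w(m))=k+w(m)$ for $w(m)$ odd, giving $d=\min(6,k+1)$, i.e.\ $5$ for $k=4$ and $6$ for $k\geq 5$. The paper merely asserts case-by-case weight tables (and an unused remark about linear combinations of the $e_i'$ having weight $4$), so your closed-form enumeration is the cleaner route; the only repair needed is the parity issue above.
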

\begin{proof}
Since $\mathcal{L}_k $ is a sub-space over $ \mathbb {F} _2 $, any linear combination between the code words $ e' _ {1}, e '_ {2}, ..., e' _ {k} $ gives a code in $\mathcal{L}_k$ of weight equal to 4.\\ Then, for a message $ m $ in $ \mathbb {F} ^ {k} _2 $, the code word $ c $ generated by the matrix $ G '$ (ie $ c = m \; \times \; G '$) has a weight:
\begin{enumerate}
\item Pour $k=4$, 
\begin{equation}
 \begin{array}{ccc}
 \text{ if } w(m)=1 & \longrightarrow & w(c')= 5,\\
 \text{ if } w(m)=2 & \longrightarrow & w(c')= 6,\\
  \text{ if } w(m)=3 & \longrightarrow & w(c')\geq 7,\\
 & \vdots  & \\
 \text{ if } w(m)=k & \longrightarrow & w(c') \geq k+4.\\
 \end{array}
 \end{equation}
So we have a code 2-corrector $\mathcal{L}_4 ^{+}$ with a minimal distance $ d = 5 $.

 \item For $k\geq 5$,
 \begin{equation}
  \begin{array}{ccc}
  \text{ if } w(m)=1 & \longrightarrow & w(c')\geq 6,\\
  \text{ if } w(m)=2 & \longrightarrow & w(c') =6,\\
   \text{ if } w(m)=3 & \longrightarrow & w(c')\geq 7,\\
  & \vdots  & \\
  \text{ if } w(m)=k & \longrightarrow & w(c') \geq k+4.\\
  \end{array}
  \end{equation}
   So we have a code $ 2 $ -corrector $\mathcal{L}_k ^{+}$ with a minimal distance $ d = 6 $.
      
\end{enumerate}
\end{proof}

\begin{example} Le code $\mathcal{L}_4 ^{+}$. \\
Now take the four (4) vectors in $ \mathcal{L}_4$ which are: 
 \begin{equation}
 \begin{array}{c}
 e'_{1}=0111 1000 \\
 e'_{2}=1011 0100 \\
 e'_{3}=1101 0010 \\
 e'_{4}=1110 0001 \\
 \end{array}.
 \end{equation}
The code $\mathcal{L}_4 ^{+}$ is as follows:
  \begin{equation}
  G'=  \left( \begin{array}{cccccccccccc}
  \textcolor{green}{1} & 0 & 0 & 0 & \textcolor{green}{0} & 1 & 1 & 1 & \textcolor{green}{1} & 0 & 0 & 0 \\
  0 & \textcolor{green}{1} & 0 & 0 & 1 & \textcolor{green}{0} & 1 & 1 & 0 & \textcolor{green}{1} & 0 & 0 \\
  0 & 0 & \textcolor{green}{1} & 0 & 1 & 1 & \textcolor{green}{0} & 1 &  0 & 0 & \textcolor{green}{1} & 0  \\
  0 & 0 & 0 & \textcolor{green}{1} &  1 & 1 & 1 & \textcolor{green}{0} & 0 & 0 & 0 & \textcolor{green}{1} 
   \end{array}\right) . 
  \end{equation}
 \begin{equation*}
\mathcal{L}_4 ^{+}= \begin{array}{c}
  0000 \; 0000 \; 0000\\
  0001 \; 1110 \; 0001\\
  0010 \; 1101 \; 0010\\
  0100 \; 1011 \; 0100\\
  1000 \; 0111 \; 1000\\
  \end{array}
      \begin{array}{c}
     		 0011 \; 0011 \; 0011\\
             0101 \; 0101 \; 0101\\
             1001 \; 1001 \; 1001\\
             0110 \; 0110 \; 0110\\
             1010 \; 1010 \; 1010\\
             1100 \; 1100 \; 1100\\
      \end{array}
         \begin{array}{c}
          0111 \; 1000 \; 0111\\
          1011 \; 0100 \; 1011\\
          1101 \; 0010 \; 1101\\
          1110 \; 0001 \; 1110\\
          1111 \; 1111 \; 1111
         \end{array}
 \end{equation*}
\end{example}

\begin{example} The code $\mathcal{L}_5 ^{+}$. \\
Now take the five (5) vectors in $ \mathcal{L}_5$ which are:
 \begin{equation}
 \begin{array}{c}
 e'_{1}=01111 \; 10000 \\
 e'_{2}=10111 \; 01000 \\
 e'_{3}=11011 \; 00100 \\
 e'_{4}=11101 \; 00010 \\
 e'_{4}=11110 \; 00001 \\
 \end{array}.
 \end{equation}
The code $\mathcal{L}_5 ^{+}$ is thus as follows: \\
\begin{equation}
  G'= \left( \begin{array}{ccccccccccccccc}
  \textcolor{green}{1} & 0 & 0 & 0 & 0 & \textcolor{green}{0} & 1 & 1 & 1 & 1 &  \textcolor{green}{1} & 0 & 0 & 0 & 0\\
  0 & \textcolor{green}{1} & 0 & 0 & 0 & 1 & \textcolor{green}{0} & 1 & 1 & 1 & 0 & \textcolor{green}{1} & 0 & 0 & 0\\
  0 & 0 & \textcolor{green}{1} & 0 & 0 & 1 & 1 & \textcolor{green}{0} & 1 & 1 & 0 & 0 & \textcolor{green}{1} & 0 & 0\\
  0 & 0 & 0 & \textcolor{green}{1} & 0 & 1 & 1 & 1 & \textcolor{green}{0} & 1 & 0 & 0 & 0 & \textcolor{green}{1} & 0\\
  0 & 0 & 0 & 0 & \textcolor{green}{1} & 1 & 1 & 1 & 1 & \textcolor{green}{0} &  0 & 0 & 0 & 0 & \textcolor{green}{1}\\
   \end{array}\right) . 
  \end{equation}
The code words of $\mathcal{L}_5 ^{+}$ are:
 \begin{equation}\label{L5+}
\begin{array}{c}
 00000 \; 00000 \; 00000 \\
 00001 \; 11110 \; 00001 \\
 00010 \; 11101 \; 00010 \\
 00100 \; 11011 \; 00100 \\
 01000 \; 10111 \; 01000 \\
 10000 \; 01111 \; 10000 \\
  \end{array}
  \begin{array}{c}
   00011 \; 00011 \; 00011 \\
   00101 \; 00101 \; 00101 \\
   00110 \; 00110 \; 00110 \\
   01001 \; 01001 \; 01001 \\
   01010 \; 01010 \; 01010 \\
   01100 \; 01100 \; 01100 \\
   10001 \; 10001 \; 10001 \\
   10010 \; 10010 \; 10010 \\
   10100 \; 10100 \; 10100 \\
   11000 \; 11000 \; 11000 \\
  \end{array}
   \begin{array}{c}
     11100 \; 00011 \; 11100 \\
     11010 \; 00101 \; 11010 \\
     11001 \; 00110 \; 11001\\
     10110 \; 01001 \; 10110\\
     10101 \; 01010 \; 10101\\
     10011 \; 01100 \; 10011\\
     01110 \; 10001 \; 01110\\
     01101 \; 10010 \; 01101\\
     01011 \; 10100 \; 01011\\
     00111 \; 11000 \; 00111\\
    \end{array}
    \begin{array}{c}
     01111 \; 01111 \; 01111 \\
     10111 \; 10111 \; 10111\\
     11011 \; 11011 \; 11011 \\
     11101 \; 11101 \; 11101 \\
     11110 \; 11110 \; 11110 \\
     11111 \; 00000 \; 11111\\
      \end{array}
\end{equation}
\end{example}

\section{Study of the Weight of the codes $\mathcal{L}_k$ and $\mathcal{L}_k ^{+}$ }
Denote respectively by $\mathcal{L}_k|_{w=p}$ and $\mathcal{L}_k^{+}|_{w=p}$  the set of all elements of the code $\mathcal{L}_k$ and $\mathcal{L}_k ^{+}$ who have a weight equal to $p$.
\begin{thm}
For $\mathcal{L}_k$, the set of the weight of the words of this code form the set 
\begin{equation}
W(\mathcal{L}_k) = \left\lbrace 4i, i\in \mathbb{N} \mid 4i\leq 2k \right\rbrace \cup \left\lbrace  k\right\rbrace.
\end{equation}
\begin{itemize}
\item[-] For all weight in $\left\lbrace 4i, i\in \mathbb{N} \mid 4i\leq 2k \right\rbrace\backslash \left\lbrace  k\right\rbrace$ ,  $$\text{Card} \; \mathcal{L}_k|_{w=4i}=\left( \begin{array}{c}
k\\
2i
\end{array}\right)$$ 
\item[-] For the integer $k$ such that  $k \notin  \left\lbrace 4i, i\in \mathbb{N} \mid 4i\leq 2k \right\rbrace $, The set $\mathcal{L}_k|_{w=k}$ have cardinality, $$\text{Card} \; \mathcal{L}_k|_{w=k}=\sum_{i,\; 2i+1\leq k}\left( \begin{array}{c}
k\\
2i+1
\end{array}\right)$$
\item[-] For the integer $k$ such that $k\in  \left\lbrace 4i, i\in \mathbb{N} \mid 4i\leq 2k \right\rbrace $, the set $\mathcal{L}_k|_{w=k}$ have cardinality 
$$\text{Card} \; \mathcal{L}_k|_{w=k}=\sum_{i\mid \; 2i+1\leq k}\left( \begin{array}{c}
k\\
2i+1
\end{array}\right)+ \left( \begin{array}{c}
k\\
k/2
\end{array}\right) $$

\end{itemize}

\end{thm}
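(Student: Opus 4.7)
The plan is to leverage the explicit description of codewords obtained inside the proof of Theorem~\ref{thmpric1}. There it was shown that for any message $m \in \mathbb{F}_2^k$ the codeword $\psi(m)$ has the form $m\,m$ when $w(m)$ is even and the form $m\,\overline{m}$ when $w(m)$ is odd. Consequently $w(\psi(m)) = 2\,w(m)$ in the even case and $w(\psi(m)) = w(m) + (k - w(m)) = k$ in the odd case. In other words, the weight of any codeword is completely determined by the parity of the message weight.

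First I would identify $W(\mathcal{L}_k)$. As $w(m)$ ranges over even integers in $[0,k]$, the associated codeword weights $2\,w(m)$ range exactly over the multiples of $4$ not exceeding $2k$; odd-weight messages all produce codewords of weight exactly $k$. This gives the claimed set $W(\mathcal{L}_k) = \{4i : 4i \leq 2k\} \cup \{k\}$. Next, by the injectivity of $\psi$, counting codewords of a prescribed weight reduces to counting messages that produce them. For any weight $4i$ with $4i \neq k$, only messages of weight $2i$ contribute, so the count is $\binom{k}{2i}$.

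The only point requiring care, and the main obstacle in the bookkeeping, is the count of codewords of weight $k$. Two subcases must be separated according to whether $k$ itself is a multiple of $4$. If $k \notin \{4i : 4i \leq 2k\}$, then no even-weight message has image of weight $k$, so the codewords of weight $k$ come entirely from odd-weight messages, giving the sum $\sum_{2i+1 \leq k} \binom{k}{2i+1}$. If on the other hand $k = 4i_0$ with $i_0 = k/4$, then $k/2 = 2i_0$ is even and messages of weight $k/2$ also map to codewords of weight $2 \cdot (k/2) = k$; their contribution $\binom{k}{k/2}$ must be added on top of the odd-weight contributions. Once this dichotomy is drawn correctly (ensuring that the even-weight messages of weight $k/2$ are not inadvertently counted among the $\binom{k}{2i}$ for $4i \neq k$), the three bulleted formulas follow directly from the weight dichotomy and the injectivity of $\psi$, with no further computation required.
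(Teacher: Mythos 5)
Your proposal is correct and follows essentially the same route as the paper: both arguments rest on the dichotomy $c = m\,m$ (even weight, giving $w(c)=2w(m)$) versus $c = m\,\overline{m}$ (odd weight, giving $w(c)=k$) established in the proof of the main theorem, then count messages by weight and handle the overlap case $k \equiv 0 \pmod 4$ by adding the $\binom{k}{k/2}$ contribution from messages of weight $k/2$. No substantive difference to report.
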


\begin{proof}
As defined above, a received code word $c$ in $\mathcal{L}_{k}$ can be split into two word, ie $ c = m_ {1} \; m_ {2} $ where $ m_ {1} $ is the message sent and $ m_ {2} $ which is the control code. Furthermore
$$m_2 = \left\lbrace \begin{array}{cc}
m_1 &  \text{if the weight of }   m_1 \text{ is even } \\
\overline{m_1} & \text{ if the weight of } m_1 \text{ is odd }
\end{array}\right. $$
So for a $m_1$ such that $w(m_1)=2i$, we have $w(c)=4i$ and  $$\text{ Card } \mathcal{L}_k|_{w=4i}=\text{card}\left\lbrace m_1\in \mathbb{F}_{2}^{k} \mid w(m_1)=2i\right\rbrace =  \left( \begin{array}{c}
k\\
2i
\end{array}\right).$$
Furthermore, for a $m_1$ such that $w(m_1)=2i+1$, we have $w(c)=k$  because $w(m_1)+w(m_2)=k$. then 
$$\text{ Card }\mathcal{L}_k|_{w=k}=\sum_{i\mid 2i+1\leq k} \text{card}\left\lbrace m_1\in \mathbb{F}_{2}^{k} \mid w(m_1)=2i+1\right\rbrace = \sum_{i\mid 2i+1\leq k} \left( \begin{array}{c}
k\\
2i+1
\end{array}\right).$$
For the integer $k$ such that $k\in  \left\lbrace 4i, i\in \mathbb{N} \mid 4i\leq 2k \right\rbrace $, we have, 
$$\mathcal{L}_k|_{w=k}=\cup_{i\mid 2i+1\leq k}\left\lbrace m_1\in \mathbb{F}_{2}^{k} \mid w(m_1)=2i+1\right\rbrace \cup \left\lbrace m_1\in \mathbb{F}_{2}^{k} \mid w(m_1)=4i, \;k=4i\right\rbrace  $$
So, the set $\mathcal{L}_k|_{w=k}$ have cardinality 
$$\text{Card} \; \mathcal{L}_k|_{w=k}= \left( \begin{array}{c}
k\\
k/2
\end{array}\right)+\sum_{i\mid \; 2i+1\leq k}\left( \begin{array}{c}
k\\
2i+1
\end{array}\right) $$
\end{proof}

\begin{example}
Take the code $\mathcal{L}_4$, The set $W(\mathcal{L}_4)$ is  $$ W(\mathcal{L}_4)=\left\lbrace 4i, i\in \mathbb{N} \mid 4i\leq 8 \right\rbrace \cup \left\lbrace  4\right\rbrace.$$ So,
$W(\mathcal{L}_4)=\left\lbrace 0,\; 4,\;8 \right\rbrace $. We have
 $$\text{Card}\;\mathcal{L}_4|_{w=0}=\left(\begin{array}{c}
 4\\
 0
 \end{array}\right) = 1 $$
As $4\in  \left\lbrace 4i, i\in \mathbb{N} \mid 2i\leq 4 \right\rbrace $, we have 
$$\text{Card} \; \mathcal{L}_4|_{w=4}= \left( \begin{array}{c}
4\\
2
\end{array}\right)+\left( \begin{array}{c}
4\\
1
\end{array}\right)+\left( \begin{array}{c}
4\\
3
\end{array}\right) = 14   $$

Finally,
$$\text{Card} \; \mathcal{L}_4|_{w=8}= \left( \begin{array}{c}
4\\
4
\end{array}\right) = 1 $$ 
Refer to \eqref{L4} to see the code $\mathcal{L}_4$.

\end{example}

\begin{thm}
For $\mathcal{L}_k^{+}$, the set of the weight of the words of this code form the set 
\begin{equation}
W(\mathcal{L}_k^{+}) = \left\lbrace 6i, i\in \mathbb{N} \mid 6i\leq 3k \right\rbrace \cup \left\lbrace  k+(2i+1), i \in \mathbb{N}\mid 2i+1\leq k\right\rbrace.
\end{equation}

\begin{itemize}
\item[-] For all weight in $\left\lbrace 6i, i\in \mathbb{N} \mid 6i\leq 3k \right\rbrace \backslash \left\lbrace  k+(2i+1), i \in \mathbb{N}\mid 2i+1\leq k\right\rbrace$   ,  $$\text{Card} \; \mathcal{L}_k|_{w=4i}=\left( \begin{array}{c}
k\\
2i
\end{array}\right)$$ 
\item[-] For all weight in $\left\lbrace  k+(2i+1), i \in \mathbb{N}\mid 2i+1\leq k\right\rbrace \backslash \left\lbrace 6i, i\in \mathbb{N} \mid 6i\leq 3k \right\rbrace $,\\ the set  $\mathcal{L}_k|_{w=k+(2i+1)}$ have cardinality, $$\text{Card} \; \mathcal{L}_k|_{w=k+(2i+1)}=\left( \begin{array}{c}
k\\
2i+1
\end{array}\right)$$
\item[-] For all weight $p$ in $\left\lbrace 6i, i\in \mathbb{N} \mid 6i\leq 3k \right\rbrace  \cap \left\lbrace  k+(2j+1), j \in \mathbb{N}\mid 2j+1\leq k\right\rbrace  $, the set $\mathcal{L}_k|_{w=p}$ have cardinality 
$$\text{Card} \; \mathcal{L}_k|_{w=p}=\left( \begin{array}{c}
k\\
2i
\end{array}\right)+ \left( \begin{array}{c}
k\\
2j+1
\end{array}\right) $$ where $p=6i=k+2j+1$.

\end{itemize}

\end{thm}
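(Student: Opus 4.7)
The plan is to reuse the structural description of codewords of $\mathcal{L}_k$ established in the proof of Theorem~\ref{thmpric1} and then read off the weight distribution of $\mathcal{L}_k^{+}$ by bookkeeping according to the parity of the weight of the source message.

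First I would observe that since $G' = (I_k \mid G_k \mid I_k)$, a codeword $c' = m \cdot G'$ has the form $c' = (m,\; m\cdot G_k,\; m)$. From the proof of Theorem~\ref{thmpric1} we already know that the middle block satisfies $m\cdot G_k = m$ when $w(m)$ is even and $m\cdot G_k = \overline{m}$ when $w(m)$ is odd. Consequently every codeword of $\mathcal{L}_k^{+}$ falls into exactly one of two shapes: $c' = (m,m,m)$ if $w(m)$ is even, and $c' = (m,\overline{m},m)$ if $w(m)$ is odd.

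Next I would compute the Hamming weight in each case. If $w(m)=2i$, then $w(c') = 3w(m) = 6i$, which forces $6i \leq 3k$. If $w(m)=2i+1$, then $w(\overline{m}) = k-(2i+1)$, so $w(c') = 2(2i+1) + (k-(2i+1)) = k + (2i+1)$, with the constraint $2i+1 \leq k$. This immediately gives the weight set $W(\mathcal{L}_k^{+}) = \{6i : 6i \leq 3k\} \cup \{k+(2i+1) : 2i+1 \leq k\}$.

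Finally I would count. Since $\psi$ is injective, the number of codewords of a given weight equals the number of source messages producing that weight. For a weight $p = 6i$ not of the form $k+(2j+1)$, only even-weight messages of weight $2i$ contribute, giving $\binom{k}{2i}$. For a weight $p = k+(2j+1)$ not of the form $6i$, only odd-weight messages of weight $2j+1$ contribute, giving $\binom{k}{2j+1}$. For a weight $p$ lying in the intersection, the two contributions are disjoint (one set of messages has even weight, the other odd), so the cardinalities add and we get $\binom{k}{2i}+\binom{k}{2j+1}$ with $p=6i=k+2j+1$. The main point requiring a little care is precisely this disjointness in the overlap case, which is ensured by the parity argument; beyond that the statement is a direct read-off from the block structure $(m, m, m)$ versus $(m, \overline{m}, m)$.
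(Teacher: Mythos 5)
Your proposal is correct and follows essentially the same route as the paper's own proof: decompose each codeword as $c'=(m,\;m\cdot G_k,\;m)$, use the parity of $w(m)$ to identify the middle block as $m$ or $\overline{m}$, compute the resulting weights $6i$ and $k+(2i+1)$, and count messages by weight, adding the two binomial contributions in the overlap case. Your explicit remark that the even-weight and odd-weight message sets are disjoint in the intersection case is a small clarification the paper leaves implicit, but the argument is the same.
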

\begin{proof}
We know that for a  code $ c $ in $\mathcal{L}_k ^{+}$, this word have the form $ c = m_1 \; m_2 \; m_3 $ where $ m_1 = m_3 $ and that
\begin{equation*}
m_2=\left\lbrace \begin{array}{cc}
m_1 & \text{ if the weight of }  m_1 \text{ is even } \\
\text{or} & \\
\overline{m_1} & \text{ if the weight of } m_1 \text{is odd}
\end{array}\right.  
\end{equation*}
So for a $m_1$ such that $w(m_1)=2i$, we have $w(c)=6i$ and  $$\mathcal{L}_k|_{w=6i}=\text{card}\left\lbrace m_1\in \mathbb{F}_{2}^{k} \mid w(m_1)=2i\right\rbrace =  \left( \begin{array}{c}
k\\
2i
\end{array}\right).$$
Furthermore, for a $m_1$ such that $w(m_1)=2i+1$, we have $w(c)=k+(2i+1)$  because $w(m_1)+w(m_2)=k$. then 
$$\mathcal{L}_k|_{w=k+(2i+1)}=\text{card}\left\lbrace m_1\in \mathbb{F}_{2}^{k} \mid w(m_1)=2i+1\right\rbrace =  \left( \begin{array}{c}
k\\
2i+1
\end{array}\right).$$
For all weight $p$ in $\left\lbrace 6i, i\in \mathbb{N} \mid 6i\leq 3k \right\rbrace  \cap \left\lbrace  k+(2j+1), j \in \mathbb{N}\mid 2j+1\leq k\right\rbrace  $, we have, 
$$\mathcal{L}_k|_{w=p}=\left\lbrace m_1\in \mathbb{F}_{2}^{k} \mid w(m_1)=2i+1,\; p=k+(2i+1)\right\rbrace \cup \left\lbrace m_1\in \mathbb{F}_{2}^{k} \mid w(m_1)=2i, \;p=6i\right\rbrace  $$
So, the set $\mathcal{L}_k|_{w=k}$ have cardinality 
$$\text{Card} \; \mathcal{L}_k|_{w=p}=\left( \begin{array}{c}
k\\
2i
\end{array}\right)+ \left( \begin{array}{c}
k\\
2j+1
\end{array}\right) $$

\end{proof}

\begin{example}
For $\mathcal{L}_5^{+}$, the set $W(\mathcal{L}_5^{+})$ is 
\begin{equation*}
W(\mathcal{L}_5^{+}) = \left\lbrace 6i, i\in \mathbb{N} \mid 6i\leq 15 \right\rbrace \cup \left\lbrace  5+(2i+1), i \in \mathbb{N}\mid 2i+1\leq 5\right\rbrace.
\end{equation*}
So,
\begin{equation*}
W(\mathcal{L}_5^{+}) = \left\lbrace 0,\;6,\;12\right\rbrace \cup \left\lbrace 6,\;8\;10\right\rbrace = \left\lbrace 0,\;6,\;8,\;10,\;12\right\rbrace.
\end{equation*}
We have
 $$\text{Card}\;\mathcal{L}_5^{+}|_{w=0}=\left(\begin{array}{c}
 5\\
 0
 \end{array}\right) = 1. $$
 As $6\in \left\lbrace 0,\;6,\;12\right\rbrace \cap \left\lbrace 6,\;8\;10\right\rbrace$, we have 
  $$\text{Card}\;\mathcal{L}_5^{+}|_{w=6}=\left(\begin{array}{c}
  5\\
  2
  \end{array}\right) + \left(\begin{array}{c}
   5\\
   1
   \end{array}\right)= 15.$$
 Finally, 
  $$\text{Card}\;\mathcal{L}_5^{+}|_{w=8}=\left(\begin{array}{c}
  5\\
  3
  \end{array}\right) = 10, \text{ Card}\;\mathcal{L}_5^{+}|_{w=10}=\left(\begin{array}{c}
      5\\
      5
      \end{array}\right) = 1,$$
     $$\text{ and } \text{Card}\;\mathcal{L}_5^{+}|_{w=12}=\left(\begin{array}{c}
      5\\
      4
      \end{array}\right) = 5 $$
Refer to \eqref{L5+} to see the code $\mathcal{L}_5^{+}$.
      
\end{example}

\section{Decoding for the error correcting codes $\mathcal{L}_k$ and $\mathcal{L}_k ^{+}$ }
 After considering the parameters necessary for the study of these codes, we present here the appropriate decoding algorithms.
  \subsection{The dual codes of  $\mathcal{L}_k$ and $\mathcal{L}_k ^{+}$}
  \begin{thm}\label{Groebner} (See \cite{FJ,ME,JH} )\\
  If $C$ is an $[n, k ]$ code over $\mathbb{F}_{2}$, then the dual code $C^{\perp}$ is given by all words $u\in \mathbb{F}_{2} ^{n}$ such that  $<u\;,\;c>=0$ for each $c\in C$, where $<\;,\;>$ denotes the ordinary
  inner product. The dual code $C^{\perp}$ is an $[n, n-k]$ code. If $G=\left( I_{k}\ \mid M \right) $ is a generator matrix for $C$, then $H=\left( M^{T} \mid I_{n-k}\right) $ is the generator matrix for $C^{\perp}$.\\
   \end{thm}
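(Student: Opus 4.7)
The plan is to verify the three assertions in the statement in order: that $C^{\perp}$ is a linear subspace of $\mathbb{F}_{2}^{n}$, that it has dimension exactly $n-k$, and that the matrix $H=(M^{T}\mid I_{n-k})$ is a generator matrix for it.

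First I would reformulate $C^{\perp}$ as the kernel of the linear map $\varphi:\mathbb{F}_{2}^{n}\longrightarrow\mathbb{F}_{2}^{k}$ sending $u\mapsto G\cdot{}^{t}u$. Because the inner product is bilinear, a vector $u$ is orthogonal to every element of $C$ if and only if it is orthogonal to each of the $k$ rows of $G$, which is exactly the condition $G\cdot{}^{t}u=0$. Since kernels of linear maps are subspaces, this already shows that $C^{\perp}$ is a linear subspace. It also gives the dimension immediately via rank--nullity: the $k$ rows of $G$ are linearly independent (as $G$ is a generator matrix of an $[n,k]$ code), so $\varphi$ has rank $k$, and $\dim C^{\perp}=\dim\ker\varphi=n-k$.

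To identify $H=(M^{T}\mid I_{n-k})$ as a generator matrix I would perform the block computation
$$G\cdot{}^{t}H=(I_{k}\mid M)\begin{pmatrix}M\\ I_{n-k}\end{pmatrix}=I_{k}\cdot M+M\cdot I_{n-k}=M+M=0,$$
where the final equality uses that $2=0$ in $\mathbb{F}_{2}$. This shows each row of $H$ is orthogonal to every row of $G$ and therefore lies in $C^{\perp}$. The block $I_{n-k}$ inside $H$ forces its $n-k$ rows to be linearly independent, so they span a subspace of $C^{\perp}$ of dimension $n-k$; by the dimension count already obtained, this subspace is all of $C^{\perp}$, and $H$ is a generator matrix.

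I do not anticipate a genuine obstacle here, since this is the textbook fact linking a systematic generator matrix to its parity-check form. The only delicate point is ensuring the block multiplication is written in the correct order and remembering that $M+M=0$ over $\mathbb{F}_{2}$; apart from that, every step is a direct consequence of linear algebra applied to the systematic shape $G=(I_{k}\mid M)$.
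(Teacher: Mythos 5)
Your proof is correct and complete. Note, however, that the paper does not actually prove this statement: it is quoted verbatim from the cited references (MacWilliams--Sloane, Saleemi--Zimmermann, van Lint) with no proof supplied, so there is nothing in the paper to compare your argument against. What you have written is the standard textbook derivation found in those sources: identifying $C^{\perp}$ with $\ker\bigl(u\mapsto G\,{}^{t}u\bigr)$, applying rank--nullity to get $\dim C^{\perp}=n-k$, and then checking the block identity $G\,{}^{t}H=M+M=0$ over $\mathbb{F}_{2}$ together with the linear independence of the rows of $H$ forced by the $I_{n-k}$ block. All steps are sound, the matrix dimensions match, and the dimension count correctly upgrades ``rows of $H$ lie in $C^{\perp}$'' to ``rows of $H$ span $C^{\perp}$.''
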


   \begin{example}
  For the code $\mathcal{L}_{4}$ and $\mathcal{L}_{4}^{+}$ the generator matrix is respectively 
  \begin{equation*}
  G_{\mathcal{L}_{4}}= \left(\begin{array}{cccccccc}
       \textcolor{green}{1} & 0 & 0 & 0 & \textcolor{green}{0} & 1 & 1 & 1\\
       0 & \textcolor{green}{1} & 0 & 0 & 1 & \textcolor{green}{0} & 1 & 1\\
       0 & 0 & \textcolor{green}{1} & 0 & 1 & 1 & \textcolor{green}{0} & 1\\
       0 & 0 & 0 & \textcolor{green}{1} & 1 & 1 & 1 & \textcolor{green}{0}\\
      \end{array}\right)
    \end{equation*}
    and
    \begin{equation*}
           G_{\mathcal{L}_{4}^{+}}=  \left( \begin{array}{cccccccccccc}
         \textcolor{green}{1} & 0 & 0 & 0 & \textcolor{green}{0} & 1 & 1 & 1 & \textcolor{green}{1} & 0 & 0 & 0 \\
         0 & \textcolor{green}{1} & 0 & 0 & 1 & \textcolor{green}{0} & 1 & 1 & 0 & \textcolor{green}{1} & 0 & 0 \\
         0 & 0 & \textcolor{green}{1} & 0 & 1 & 1 & \textcolor{green}{0} & 1 &  0 & 0 & \textcolor{green}{1} & 0  \\
         0 & 0 & 0 & \textcolor{green}{1} &  1 & 1 & 1 & \textcolor{green}{0} & 0 & 0 & 0 & \textcolor{green}{1} 
          \end{array}\right) .
    \end{equation*}
  So the dual code $\mathcal{L}_{4}^{\perp}$ and $(\mathcal{L}_{4}^{+})^{\perp}$ have respectively his generator matrix:
   \begin{equation*}
  H_{\mathcal{L}_{4}^{\perp}}= \left(\begin{array}{cccccccc}
       \textcolor{green}{0} & 1 & 1 & 1 & \textcolor{green}{1} & 0 & 0 & 0 \\
       1 & \textcolor{green}{0} & 1 & 1 & 0 & \textcolor{green}{1} & 0 & 0 \\
       1 & 1 & \textcolor{green}{0} & 1 & 0 & 0 & \textcolor{green}{1} & 0 \\
       1 & 1 & 1 & \textcolor{green}{0} & 0 & 0 & 0 & \textcolor{green}{1} \\
       \end{array}\right)
     \end{equation*}
     and 
         \begin{equation*}
      H_{(\mathcal{L}_{4}^{+})^{\perp}}=  \left( \begin{array}{cccccccccccc}
           \textcolor{green}{0} & 1 & 1 & 1 & \textcolor{green}{1} & 0 & 0 & 0 & 0 & 0 & 0 & 0 \\
           1 & \textcolor{green}{0} & 1 & 1 & 0 & \textcolor{green}{1} & 0 & 0 & 0 & 0 & 0 & 0 \\
           1 & 1 & \textcolor{green}{0} & 1 & 0 & 0 & \textcolor{green}{1} & 0 & 0 & 0 & 0 & 0  \\
           1 & 1 & 1 & \textcolor{green}{0} & 0 & 0 & 0 & \textcolor{green}{1} & 0 & 0 & 0 & 0 \\
           \textcolor{green}{1} & 0 & 0 & 0 & 0 & 0 & 0 & 0 & \textcolor{green}{1} & 0 & 0 & 0 \\
           0 & \textcolor{green}{1} & 0 & 0 & 0 & 0 & 0 & 0 & 0 & \textcolor{green}{1} & 0 & 0 \\
           0 & 0 & \textcolor{green}{1} & 0 & 0 & 0 & 0 & 0 & 0 & 0 & \textcolor{green}{1} & 0 \\
           0 & 0 & 0 & \textcolor{green}{1} & 0 & 0 & 0 & 0 & 0 & 0 & 0 & \textcolor{green}{1}
            \end{array}\right) .
     \end{equation*}
   As the columns of $H_{\mathcal{L}_{4}^{\perp}}$ (or $H_{(\mathcal{L}_{4}^{+})^{\perp}}$)  are linearly independent, so for a codeword $c$ that contains exactly one error, the decoding will be easy by looking at the $H\times \;^{t}c$ syndrome.
  \end{example}
   \begin{defn}
   Let $c$ be an element of linear code $C$ such that $c=m_1\;m_2\;...\;m_n$ where $m_i \in \mathbb{F}_2$ for all $i$. we define the monomial $X^{c}$ of $\mathbb{F}_{2}[X_{1}X_{2}...X_{n}]$ by 
   \begin{equation}
   X^{c}= X_1^{m_{1}} X_2^{m_{2}}...X_n^{m_{n}} .
   \end{equation} 
   \begin{example}
   Take $c=101101$ which is a codeword of $\mathcal{L}_{3}$. In $\mathbb{F}_{2}[X_{1}X_{2}...X_{6}]$, The monomial $X^{c}$ was 
    $$X^{c}=X_{1}^{1}X_{2}^{0}X_{3}^{1}X_{4}^{1}X_{5}^{0}X_{6}^{1}=X_{1}X_{3}X_{4}X_{6}.$$
   \end{example}
   \end{defn}
   \begin{thm} (see \cite{MBQ})\\
   Let $C$ be an $[n,k]$-linear systematic code  over $\mathbb{F}_{2}$. Define the binomial ideal of $\mathbb{F}_{2}[X_{1}X_{2}...X_{n}]$ associated with $C$ the set $I_{C}$ such that 
   \begin{equation}
   I_{C}=\langle X^{c}-X^{c'} \mid c-c'\in C \rangle + \langle X_{i}^{2} - 1 \mid 1\leq i \leq n\rangle.
   \end{equation}
   \end{thm}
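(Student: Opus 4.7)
The plan is to verify that $I_{C}$ as defined is a well-posed ideal and to derive a finite and more economical presentation of it that will be the basis of all subsequent Gröbner-basis arguments. My first step is to replace the infinite set of binomial generators $\{X^{c}-X^{c'} : c-c'\in C\}$ by a smaller one. Since $C$ is a subspace of $\mathbb{F}_{2}^{n}$, the relation $c-c'\in C$ depends only on the coset of $c$ modulo $C$; and modulo the auxiliary relations $X_{i}^{2}-1$ one has $X^{c}-X^{c'}=X^{c'}(X^{c-c'}-1)$, so each defining binomial already lies in the ideal spanned by $\{X^{v}-1 : v\in C\}$ together with the squaring relations. Conversely each $X^{v}-1$ with $v\in C$ is of the form $X^{c}-X^{c'}$ with $c-c'=v$, giving the equivalent and more useful presentation
\begin{equation*}
I_{C}=\langle X^{v}-1 : v\in C\rangle + \langle X_{i}^{2}-1 : 1\le i\le n\rangle.
\end{equation*}
By linearity of exponents in the quotient by $\langle X_{i}^{2}-1\rangle$, this can be cut down even further to the rows of a generator matrix of $C$.

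Next I would analyse the residue structure of the quotient $R=\mathbb{F}_{2}[X_{1},\ldots,X_{n}]/I_{C}$. The relations $X_{i}^{2}=1$ show that every monomial reduces to one of the $2^{n}$ squarefree monomials $X^{u}$ with $u\in \mathbb{F}_{2}^{n}$, and the additional relations $X^{v}=1$ for $v\in C$ identify $X^{u}$ with $X^{u+v}$ for every $v\in C$. Consequently the classes of the squarefree monomials in $R$ are in natural bijection with the cosets of $C$ in $\mathbb{F}_{2}^{n}$; this gives $\dim_{\mathbb{F}_{2}}R=2^{n-k}$ and shows that $I_{C}$ is a zero-dimensional (and, being generated by binomials of the form $X^{\alpha}-1$, radical) binomial ideal. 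This is the structural content needed in the next subsection, where the Gröbner basis of $I_{C}$ is used to decode $\mathcal{L}_{k}$ and $\mathcal{L}_{k}^{+}$.

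The main obstacle I foresee is verifying that, under a weight-compatible monomial order (for instance degree-lexicographic), the standard monomials modulo $I_{C}$ are precisely the coset leaders of $C$, that is, the minimum-Hamming-weight representatives of each coset of $\mathbb{F}_{2}^{n}/C$. This reduces to showing that no $S$-polynomial reduction produced during Buchberger's algorithm can replace a minimal-weight representative $X^{u}$ by a heavier $X^{u+v}$, which follows from the fact that the chosen order refines the partial order by Hamming weight. Once this is in place, decoding a received word $r$ is implemented by reducing $X^{r}$ modulo the reduced Gröbner basis of $I_{C}$: the normal form is exactly $X^{r-e}$, where $e$ is the coset leader, i.e.\ the error pattern to subtract. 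This is the algorithmic payoff of the theorem and explains why the authors single out this ideal before stating their decoding algorithm.
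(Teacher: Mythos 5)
The statement you were asked to prove is, despite the \texttt{thm} environment, really a \emph{definition}: it introduces the binomial ideal $I_{C}$ and the paper supplies no proof at all, only the citation to Borges-Quintana et al. So there is no ``paper proof'' to match; what you have written is a reconstruction of the background that the cited references establish, and most of it is correct and genuinely useful. Your reduction of the generating set to $\{X^{v}-1 : v\in C\}$ via $X^{c}-X^{c'}=X^{c'}(X^{c-c'}-1)$ modulo the relations $X_{i}^{2}-1$ is right (over $\mathbb{F}_{2}$ one has $c-c'=c+c'$ and exponents add modulo $2$), as is the further cut-down to the rows of a generator matrix and the identification of the residue classes of squarefree monomials with the $2^{n-k}$ cosets of $C$, which is exactly the structure the paper's decoding subsection relies on.

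Two corrections. First, your parenthetical claim that $I_{C}$ is radical is false in characteristic $2$: $X_{i}^{2}-1=(X_{i}-1)^{2}$, so $X_{i}-1$ is nilpotent in the quotient whenever $e_{i}\notin C$, and $R=\mathbb{F}_{2}[X_{1},\ldots,X_{n}]/I_{C}$ is the (non-reduced) group algebra of $\mathbb{F}_{2}^{n}/C$ over $\mathbb{F}_{2}$. This does not affect the dimension count or the decoding argument, but it should be deleted. Second, ``the standard monomials are precisely the coset leaders'' overstates the matter slightly: a coset may have several minimum-weight representatives, and the degree-compatible order selects exactly one of them as the standard monomial; the correct assertion is that each standard monomial is \emph{a} coset leader. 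With those repairs your account is a faithful (and more explicit) version of what the paper imports from \cite{MBQ} and \cite{ME}, including the reduced Gröbner basis $\mathcal{B}$ displayed in equation (\ref{decodage groeb}).
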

   
   \begin{thm} [Groebner basis of the Binomial ideal $I_C$](See \cite{ME})\\
   Take the lexicographic order on $\mathbb{F}_{2}[X_{1}X_{2}...X_{n}]$, i.e. $X_{1}\succ. . . \succ X_{n}$.
   An $[n,k]$ linear code systematic $C$ of generator matrix $(I_k \mid M )$ has the reduced Groebner basis 
   \begin{equation}\label{decodage groeb}
   \mathcal{B}=\left\lbrace  X_{i}-X^{m_i} \;\mid \; 1\leq i\leq k \right\rbrace \cup \left\lbrace X_{i} ^{2}-1 \;\mid\;  k+1\leq i\leq n\right\rbrace .
   \end{equation}
   Here $m_i$ is the $i^{th}$ line of the matrix $M$.
   \end{thm}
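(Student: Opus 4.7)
The strategy is a dimension count. I would verify $\mathcal{B} \subseteq I_C$, determine the leading monomials of $\mathcal{B}$ under the prescribed lex order, count the standard monomials, and compare with $\dim_{\mathbb{F}_2} R$ where $R := \mathbb{F}_2[X_1,\ldots,X_n]/I_C$. Since standard monomials modulo $\mathrm{in}(I_C)$ form an $\mathbb{F}_2$-basis of $R$, and $\mathcal{B} \subseteq I_C$ implies $\mathrm{in}(\mathcal{B}) \subseteq \mathrm{in}(I_C)$ (so the $\mathcal{B}$-standard monomials contain the $I_C$-standard ones), matching cardinalities force $\mathrm{in}(\mathcal{B}) = \mathrm{in}(I_C)$, and thus $\mathcal{B}$ is a Groebner basis.

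For $\mathcal{B}\subseteq I_C$, the binomials $X_j^2 - 1$ ($j > k$) are generators of $I_C$ by definition. For $X_i - X^{m_i}$ ($i \leq k$), the $i$-th row of $G = (I_k \mid M)$ is a codeword $c \in C$ with $X^c = X_i X^{m_i}$; taking $c' = 0$ in the definition of $I_C$ gives $X_i X^{m_i} - 1 \in I_C$, and multiplying by $X_i$ while using $X_i^2 \equiv 1 \pmod{I_C}$ yields $X^{m_i} - X_i \in I_C$. Under the lex order $X_1 \succ \cdots \succ X_n$, the leading monomial of $X_i - X^{m_i}$ is $X_i$ (since $X^{m_i}$ uses only the lex-smaller variables $X_{k+1},\ldots,X_n$), and the leading monomial of $X_j^2 - 1$ is $X_j^2$. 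The monomials divisible by no element of $\{X_1,\ldots,X_k, X_{k+1}^2,\ldots, X_n^2\}$ are exactly the squarefree monomials in $X_{k+1},\ldots,X_n$, giving $2^{n-k}$ standard monomials.

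To match this count, I would show $\dim_{\mathbb{F}_2} R = 2^{n-k}$ by writing $I_C = J + K$ with $J = \langle X_i^2 - 1 \rangle$ and $K = \langle X^c - X^{c'} : c - c' \in C\rangle$, then observing that $\mathbb{F}_2[X]/J$ is identified with the group algebra $\mathbb{F}_2[\mathbb{F}_2^n]$ via its basis of squarefree monomials, while the image of $K$ there is exactly the kernel of the projection $\mathbb{F}_2[\mathbb{F}_2^n] \to \mathbb{F}_2[\mathbb{F}_2^n/C]$, whose codomain has $\mathbb{F}_2$-dimension $2^{n-k}$. The main obstacle is this final identification: containment in the kernel is clear since each generator of $K$ visibly vanishes in the quotient by $C$, while the reverse uses that the kernel is generated as an ideal by the elements indexed by $c \in C$, each of which is the image of the $K$-generator $X^c - X^0$. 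Reducedness of $\mathcal{B}$ is immediate: no leading monomial of $\mathcal{B}$ divides any tail term, since the tails $X^{m_i}$ are squarefree in $X_{k+1},\ldots,X_n$ and $1$ is divisible by no variable.
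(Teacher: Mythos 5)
The paper itself offers no proof of this theorem --- it is quoted from the reference \cite{ME} with no argument supplied --- so there is nothing internal to compare your write-up against; I can only judge it on its own terms, and on those terms it is correct. The membership check $\mathcal{B}\subseteq I_C$ (using that the $i$-th row $c$ of $G$ gives $X^{c}-1\in I_C$ and multiplying by $X_i$ modulo $X_i^2-1$), the identification of the leading monomials as $X_1,\dots,X_k,X_{k+1}^{2},\dots,X_n^{2}$ under the stated lex order, the count of $2^{n-k}$ standard monomials, and the computation $\dim_{\mathbb{F}_2}\mathbb{F}_2[X_1,\dots,X_n]/I_C=2^{n-k}$ via the identification $\mathbb{F}_2[X]/\langle X_i^2-1\rangle\cong\mathbb{F}_2[\mathbb{F}_2^n]$ and the projection onto $\mathbb{F}_2[\mathbb{F}_2^n/C]$ are all sound; the finite cardinality comparison then forces $\langle \mathrm{LM}(\mathcal{B})\rangle=\mathrm{in}(I_C)$, and since a Groebner basis automatically generates its ideal, the relations $X_i^2-1$ for $i\le k$ that $\mathcal{B}$ omits come along for free. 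The step you flag as the main obstacle --- that the kernel of $\mathbb{F}_2[\mathbb{F}_2^n]\to\mathbb{F}_2[\mathbb{F}_2^n/C]$ is the ideal generated by $\{X^{w}-1 : w\in C\}$ --- is the standard augmentation-kernel fact, and the factorization $X^{c}-X^{c'}=X^{c'}(X^{c-c'}-1)$ in the group algebra settles both inclusions. This dimension-count route is essentially the one in the sources the paper cites (\cite{ME}, \cite{MBQ}); the main alternative would be a direct Buchberger S-polynomial verification, which is more computational and buys nothing here. Your reducedness remark is also correct.
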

  \begin{example}
  For the code $(\mathcal{L}_{4}^{+})$, the corresponding binomial ideal of this code in $\mathbb{F}_{2}[X_{1}X_{2}...X_{n}]$  has the reduced Groebner basis given by the elements\\
  \begin{equation}
  \begin{array}{ccc}
  b_{1}= X_{1}- X_{6}X_{7}X_{8}X_{9} &  & b_{5}= X_{5}^{2}-1 \\
  b_{2}= X_{2}- X_{5}X_{7}X_{8}X_{10} &  & b_{6}= X_{6}^{2}-1 \\
  b_{3}= X_{3}- X_{5}X_{6}X_{8}X_{11} &  & b_{7}= X_{7}^{2}-1 \\
  b_{4}= X_{4}- X_{5}X_{6}X_{7}X_{12} &  & b_{8}= X_{8}^{2}-1 \\
  b_{9}= X_{9}^{2}-1  &  & b_{10}= X_{10}^{2}-1 \\
  b_{11}= X_{11}^{2}-1     &  & b_{12}= X_{12}^{2}-1\\
  \end{array}
  \end{equation}
  where 
   \begin{equation}
    G_{\mathcal{L}_{4}^{+}}=  \left( \begin{array}{cccccccccccc}
    \textcolor{green}{1} & 0 & 0 & 0 & \textcolor{green}{0} & 1 & 1 & 1 & \textcolor{green}{1} & 0 & 0 & 0 \\
    0 & \textcolor{green}{1} & 0 & 0 & 1 & \textcolor{green}{0} & 1 & 1 & 0 & \textcolor{green}{1} & 0 & 0 \\
    0 & 0 & \textcolor{green}{1} & 0 & 1 & 1 & \textcolor{green}{0} & 1 &  0 & 0 & \textcolor{green}{1} & 0  \\
    0 & 0 & 0 & \textcolor{green}{1} &  1 & 1 & 1 & \textcolor{green}{0} & 0 & 0 & 0 & \textcolor{green}{1} 
     \end{array}\right) . 
    \end{equation}
  \end{example}
  
  An immediate consequence of the study of the reduced Groebner basis of the Binomial ideal $I_C$ is a decoding algorithm for linear codes . This algorithm was given and proved in \cite{MBQ}.
  
 \subsection{Other way for error-correction of the code $\mathcal{L}_k$}
\begin{enumerate}
\item The ideal case is that no error was produced during transmission. We can use two methods to detect the presence of errors:
\begin{enumerate}
\item[-] We make the product of the control matrix $ H $  with the received code and we have to find a null vector, which means that there was no error during the transmission.
\item[-] Now the second method: as our code $\mathcal{L}_k$ is a systematic code, the received code word can be split into two, ie $ c = m_ {1} \; m_ {2} $ where $ m_ {1} $ is the message sent and $ m_ {2} $ which is the control code.\\
So, if the weight of $ m_ {1} $ is even and $ m_ {1} = m_ {2} $ or if the weight of $ m_ {1} $ is odd and $ m_ {2} = \overline {m_ {1}} $, in both cases the code has no error during the transmission. Otherwise there are errors.
\end{enumerate}

\item The other case is that errors occur during transmission.\\
Suppose that an error was produced. so,  we find out here   how to fix it. The only error must be in $ m_ {1} $ or $ m_ {2} $. Moreover, if the real message $ m $ sent is of even (odd) weight, the word $ m_ {1} + m_ {2} $ is of weight 1 (resp ($k-1$)). As a result, the decoding is as follows:
\begin {itemize}
\item[-] If the weight of $ m_ {1} + m_ {2} $ is 1, it remains to find the only bit that differs $ m_ {1} $ and $ m_ {2} $ and fix it for the weight of $ m_ {1}$ to be even.
\item[-] If the weight of $ m_ {1} + m_ {2} $ is $k-1$, it remains to find the only bit for that $ m_ {2} = \overline {m_1} $ and fix it for the weight of $ m_ {1}$ to be even.

\end{itemize}  
\end{enumerate}

\subsection*{Algorithm of Decoding for  the code $\mathcal{L}_k$}
\begin{center}
\framebox[1.2\width]{
\begin{minipage}[C]{.8\textwidth}
\begin{itemize}
\item[] Input $c$  (\textit{received word}) 
\item[] Output $c'$ (\textit{corrected word}) 
\item[] Begin
\begin{itemize}
\item [] Determined $m_1$ and $m_2$ such that $c=m_1 m_2$;
\item [] Calculate $w_1=w(m_1)$, $w_2=w(m_2)$ and $w_{1,2}=w(m_1 + m_2)$;
\begin{itemize}
\item If $w_{1,2}=0$ or $w_{1,2}=k$, so $c'=c$;
\item If  $w_{1,2}=1$
\begin{itemize}
\item and if $w_1$ is even, so $c' = m_1\; m_1$;
\item and if $w_1$ is odd, so $c' = m_2\; m_2$;
\end{itemize}
\item If  $w_{1,2}=k-1$
\begin{itemize}
\item and if $w_1$ is odd, so $c' = m_1\; \overline{m_1}$;
\item and if $w_1$ is even, so $c' = \overline{m_2}\; m_2$;
\end{itemize}
\item Else print("The message contains more than one error, we can not correct them")
\end{itemize}
\end{itemize}
\item[] End
\end{itemize}
\end{minipage}
}
\end{center}
\vspace{1cm}
\subsection{Other way for error correction of the code $\mathcal{L}_k ^{+}$}
 We try to give here the correction steps for a codeword that contains at most 2 errors.
 We know that for a  code $ c $ in $ \mathcal {L}^{+}$, this word have the form $ c = m_1 \; m_2 \; m_3 $ where $ m_1 = m_3 $ and that
 \begin{equation}
 m_2=\left\lbrace \begin{array}{cc}
 m_1 & \text{ if the weight of }  m_1 \text{ is even } \\
 \text{or} & \\
 \overline{m_1} & \text{ if the weight of } m_1 \text{is odd}
 \end{array}\right.  
 \end{equation}
In this sub-section, The word $ c $ be  the received word and suppose that it's contains at most 2 errors. 

\begin{thm}
If we have
$m_1=m_2=m_3$ for a word  $ m_1 $ which have even weight or if $m_1=m_3$ and $ m_ {2} = \overline {m_ {1}} $ for a word  $ m_1 $ which have odd weight, So this code word does not contain an error.
\end{thm}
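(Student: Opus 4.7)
The plan is to reduce the claim to two ingredients: (i) every word of the stated shape is itself a codeword of $\mathcal{L}_k^{+}$, and (ii) the minimum-distance bound from Theorem \ref{thmprinc2} rules out the possibility that a non-trivial error pattern could transform a codeword into another codeword. Once both are in place, the hypothesis ``at most $2$ errors'' forces the error pattern to be zero.

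\smallskip

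\textbf{Step 1 (recognising the shape as a codeword).} I would first observe that the generating matrix of $\mathcal{L}_k^{+}$ is $G' = \left(\begin{array}{ccc} I_k & G_k & I_k \end{array}\right)$, so that $\psi'(m_1) = m_1 \, (m_1 G_k)\, m_1$. From the analysis carried out in the proof of Theorem \ref{thmpric1}, multiplication on the right by $G_k$ sends $m_1$ to $m_1$ itself if $w(m_1)$ is even and to $\overline{m_1}$ if $w(m_1)$ is odd. Consequently, a word $c = m_1\, m_2\, m_3$ satisfies the two hypotheses of the statement if and only if $c = \psi'(m_1)$; in particular $c \in \mathcal{L}_k^{+}$.

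\smallskip

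\textbf{Step 2 (minimum-distance argument).} Suppose $e$ errors were introduced during transmission, where $0 \le e \le 2$ by assumption, and let $c_0 \in \mathcal{L}_k^{+}$ be the emitted codeword. Then the received word $c$ satisfies $d(c, c_0) = e \le 2$. By Step 1, $c$ is itself in $\mathcal{L}_k^{+}$. If $e \ge 1$ we would have two distinct codewords $c$ and $c_0$ at Hamming distance at most $2$, contradicting Theorem \ref{thmprinc2} which guarantees minimum distance $\ge 5$ (for $k=4$) or $\ge 6$ (for $k \ge 5$). Hence $e = 0$ and $c = c_0$, which is the conclusion of the theorem.

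\smallskip

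\textbf{Main obstacle.} The only nontrivial point is Step 1, i.e.\ identifying the stated syntactic condition on $(m_1, m_2, m_3)$ with membership in $\mathcal{L}_k^{+}$. This is a direct consequence of the computation $m_1 G_k = m_1$ or $\overline{m_1}$ according to the parity of $w(m_1)$, which was already established inside the proof of Theorem \ref{thmpric1}; so in practice this step is a short bookkeeping verification rather than a real difficulty. Once it is in hand, Step 2 is purely the standard unique-decoding argument for a code of minimum distance $\ge 2e_c+1$ applied with $e_c = 2$.
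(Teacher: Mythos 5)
Your proposal is correct and follows essentially the same route as the paper: identify that a word of the stated shape is itself a codeword of $\mathcal{L}_k^{+}$, then conclude via the minimum-distance bound that an error pattern of weight at most $2$ cannot carry one codeword to another. Your Step 1 merely makes explicit (via the parity computation $m_1 G_k = m_1$ or $\overline{m_1}$) what the paper asserts by appealing to the construction in Theorem \ref{thmprinc2}.
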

\begin{proof}
By the construction in the theorem \ref{thmprinc2}, the received word $c$ is an element of  $ \mathcal {L}^{+}$. If we suppose that $c$ was wrong (errors less or equal than 2 bits), we have another codeword $c'$ in $ \mathcal {L}^{+}$ such that $c'$ is the exact code who transform into $c$. It is impossible because $w(c+c')\leq 2$ (in contradiction into $d=6$). So this code word does not contain an error. 
\end{proof}
\subsection*{Algorithm}
\begin{center}
\framebox[1.1\width]{
\begin{minipage}[C]{.8\textwidth}
\begin{itemize}
\item[] Input $c$  (\textit{received word}) 
\item[] Output $c'$ (\textit{corrected word}) 
\item[] Begin
\begin{itemize}
\item  Determined $m_1$, $m_2$ and $m_3$ such that $c=m_1\; m_2\; m_3$;
\item  Calculate $w_1=w(m_1)$ and $w_{1,2}=w(m_1 + m_2)$;
\begin{itemize}
\item [] If $w_1$ is even and $m_1 = m_2 =m_3$, so $c'=c$;
\item []  If $w_1$ is odd, $m_1=m_3$  and $w_{1,2}=k$, so $c'=c$;
\end{itemize}
\item Print("The message does not contain an error");
\end{itemize}
\item[] End
\end{itemize}
\end{minipage}
}
\end{center}

\vspace{0.5cm}

\begin{thm} Suppose that the word $c$ contain exactly an error.
\begin{enumerate}
 \item If $ w (m_1 + m_3) = 1 $ and  $ m_2 = m_3 $ or if $ w (m_1 + m_3) = 1 $ and $ w (m_2 + m_3) = k $, So an error is in $m_{1}$.\\
 For a correction, change $c=m_3\;m_2\;m_3.$
 \item If $ m_1 = m_3 $ and $ w (m_1 + m_2) = 1 $ or if $ m_1 = m_3 $ and $ w (m_1 + m_2) = k-1 $, so the error is in $ m_2 $.\\
 For a correction of this word: if the weight of $ m_1 $ is even, change $c=m_1\;m_1\;m_3$, else $c=m_1\;\overline{m_1}\;m_3.$
 \item If $ w (m_1 + m_3) = 1$ and $ m_1 = m_2 $ or if $ w (m_1 + m_3) = 1 $ and $ w (m_1 + m_2) = k$, so the error is in $m_3 $.\\
 For a correction of this word, change $c=m_1\;m_2\;m_1$.
 \end{enumerate}
\end{thm}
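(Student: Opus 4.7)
The plan is to case-split on which of the three blocks of $c = m_1 \; m_2 \; m_3$ carries the single flipped bit, leveraging the structural constraints satisfied by the original codeword $c' = m_1' \; m_2' \; m_3'$ guaranteed by Theorem~\ref{thmprinc2}, namely $m_1' = m_3'$ and $m_2' \in \{m_1', \overline{m_1'}\}$ depending on the parity of $w(m_1')$. In each of the three cases I would read off the observable relations between the received blocks $m_1, m_2, m_3$, check that they coincide with the criteria listed in the statement, and then verify that the proposed correction restores $c'$.

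First, suppose the error lies in $m_1$. Then $m_2 = m_2'$ and $m_3 = m_3' = m_1'$ are untouched, while $m_1$ differs from $m_1'$ in exactly one bit. Hence $w(m_1 + m_3) = w(m_1 + m_1') = 1$, and $m_2 = m_2' \in \{m_1', \overline{m_1'}\} = \{m_3, \overline{m_3}\}$, which is the disjunction ``$m_2 = m_3$ or $w(m_2 + m_3) = k$''. Replacing the first block by $m_3 = m_1'$ clearly recovers $c'$. If the error lies in $m_2$, then $m_1 = m_3$ and $m_2$ differs from $m_2' \in \{m_1, \overline{m_1}\}$ by a single bit, so $w(m_1 + m_2) \in \{1,\, k-1\}$; the true $m_2'$ is then re-derived from $m_1$ via the parity rule of Theorem~\ref{thmprinc2}, giving $m_1$ if $w(m_1)$ is even and $\overline{m_1}$ otherwise. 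The case of an error in $m_3$ is the mirror image of the first and is handled symmetrically.

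The delicate step is to check that the three sets of observable conditions are pairwise disjoint, so that the decoder can unambiguously localize the erroneous block. Cases (1) and (3) both demand $w(m_1 + m_3) = 1$, i.e.\ $m_1 \neq m_3$, which separates them from case (2), whose hypothesis is $m_1 = m_3$. To distinguish (1) from (3), I observe that (1) forces $m_2 \in \{m_3, \overline{m_3}\}$ while (3) forces $m_2 \in \{m_1, \overline{m_1}\}$; if both held simultaneously then $\{m_1, \overline{m_1}\} \cap \{m_3, \overline{m_3}\} \neq \emptyset$, forcing either $m_1 = m_3$ or $m_1 = \overline{m_3}$, both of which contradict $w(m_1 + m_3) = 1$ as soon as $k \geq 2$. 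Uniqueness of the corrected codeword itself is underwritten by the bound $d \geq 5$ from Theorem~\ref{thmprinc2}, since at most one element of $\mathcal{L}_k^{+}$ can lie within Hamming distance $1$ of $c$. This closes the argument.
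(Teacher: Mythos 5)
Your proof is correct and follows essentially the same route as the paper: a case analysis on which block $m_1$, $m_2$, or $m_3$ carries the flipped bit, reading off the observable relations from the structure $m_1' = m_3'$ and $m_2' \in \{m_1', \overline{m_1'}\}$ of the transmitted codeword. In fact you go slightly further than the paper, whose proof only establishes the forward implications (error in block $i$ implies the listed conditions); your check that the three condition sets are pairwise disjoint is what actually justifies the converse direction asserted in the statement, and is a welcome addition.
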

\begin{proof}If an error was present during transmission, three cases are possible.
\begin{enumerate}
\item If the error was in $m_1$, the exact message is $m_3$ and  $ w (m_1 + m_3) = 1 $. So, $ c = m_3 \; m_2 \; m_3 $ where 
 \begin{equation*}
 m_2=\left\lbrace \begin{array}{cc}
 m_3 & \text{ if the weight of }  m_3 \text{ is even } \\
 \text{or} & \\
 \overline{m_3} & \text{ if the weight of } m_3 \;\text{is odd} \;\; (\text{here} \; w (m_3 + m_2) = k).
 \end{array}\right.  
 \end{equation*}
 \item If the error was in $m_2$, it means that the exact message is $m_1$ and $ m_1=m_3$. So   
  \begin{equation*}
  w (m_1 + m_2) =\left\lbrace \begin{array}{cc}
  1 & \text{ if the weight of }  m_1 \; \text{ is even } \\
  \text{or} & \\
  k-1 & \text{ if the weight of } m_1 \; \text{is odd}.
  \end{array}\right.  
  \end{equation*}
  For correction,  $ c = m_1 \; m'_2 \; m_3 $ where 
  \begin{equation*}
  m'_2=\left\lbrace \begin{array}{cc}
  m_1 & \text{ if the weight of }  m_1 \; \text{ is even } \\
  \text{or} & \\
  \overline{m_1} & \text{ if the weight of } m_1 \; \text{is odd}.
  \end{array}\right.  
  \end{equation*}
  \item If the error was in $m_3$, the exact message is $m_1$. So $ w (m_1 + m_3) = 1 $  and $ c = m_1 \; m_2 \; m_1 $ where 
   \begin{equation*}
   m_2=\left\lbrace \begin{array}{cc}
   m_1 & \text{ if the weight of }  m_1 \text{ is even } \\
   \text{or} & \\
   \overline{m_1} & \text{ if the weight of } m_1 \;\text{is odd}.
   \end{array}\right.  
   \end{equation*}
\end{enumerate}
\end{proof}

\subsection*{Algorithm}
\begin{center}
\framebox[1.2\width]{
\begin{minipage}[C]{.8\textwidth}
\begin{itemize}
\item[] Input $c$  (\textit{received word}) 
\item[] Output $c'$ (\textit{corrected word}) 
\item[] Begin
\begin{itemize}
\item  Determined $m_1$, $m_2$ and $m_3$ such that $c=m_1\; m_2\; m_3$;
\item  Calculate $w_1=w(m_1)$, $w_3=w(m_3)$, $w_{1,2}=w(m_1 + m_2)$, $w_{2,3}=w(m_2 + m_3)$  and $w_{1,3}=w(m_1 + m_3)$;
\begin{itemize}
\item If $w_{1,3}=1$ 
\begin{itemize}
\item and if $m_2=m_3$, so $c'=m_3\;m_2\;m_3$;
\item and if $w_{2,3}=k$, so $c'=m_3\;m_2\;m_3$;
\item and if $m_2=m_1$, so $c'=m_1\;m_2\;m_1$;
\item and if $w_{1,2}=k$, so $c'=m_1\;m_2\;m_1$;
\end{itemize}
\item If $m_1=m_3$ and $w_{1,2}=1$ or $m_1=m_3$ and $w_{1,2}=k$, so 
\begin{itemize}
\item if $w_1$ is even,  $c'=m_1\;m_1\;m_3$;
\item else  $c'=m_1\;\overline{m_1}\;m_3$;
\end{itemize}
\end{itemize}
\end{itemize}
\item[] End
\end{itemize}
\end{minipage}
}
\end{center}

\vspace{0.5cm}

 \begin{thm} Suppose that the word $c$ contain exactly two errors and the both are in the word $m_1$ or $m_2$ or $m_3$.
 \begin{enumerate}
  \item If $ w (m_1 + m_3) = 2 $ and  $ m_2 = m_3 $ or if $ w (m_1 + m_3) = 2 $ and $ w (m_2 + m_3) = k $, the two errors are exactly in $m_{1}$.\\
  For a correction, change $c=m_3\;m_2\;m_3.$
  \item If $ m_1 = m_3 $ and $ w (m_1 + m_2) = 2 $ or if $ m_1 = m_3 $ and $ w (m_1 + m_2) = k-2 $, the two errors are exactly in $m_{2}$.\\
  For a correction of this word: if the weight of $ m_1 $ is even, change $c=m_1\;m_1\;m_3$, else $c=m_1\;\overline{m_1}\;m_3.$
  \item If $ w (m_1 + m_3) = 2 $ and $ m_1 = m_2 $ or if  $ w (m_1 + m_3) = 2 $ and $ w (m_1 + m_2) = k$, the two errors are exactly in $m_{3}$.\\
  For a correction of this word, change $c=m_1\;m_2\;m_1$.
  \end{enumerate}
 \end{thm}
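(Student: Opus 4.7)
The plan is to mirror the case analysis of the preceding single-error theorem, replacing a weight-one difference by a weight-two difference inside a single block. Let $c = m_1\,m_2\,m_3$ be the received word, and let $c^{\star} = m_1^{\star}\,m_2^{\star}\,m_3^{\star} \in \mathcal{L}_k^{+}$ be the unique nearest codeword (unique because for $k\ge 5$ the minimum distance is $6$, so two errors are correctable). Since the two errors are assumed to lie in a single block, exactly one of the following holds: (i) $m_1 \neq m_1^{\star}$ with $w(m_1 \oplus m_1^{\star}) = 2$, while $m_2 = m_2^{\star}$ and $m_3 = m_3^{\star}$; (ii) $m_2 \neq m_2^{\star}$ with $w(m_2 \oplus m_2^{\star}) = 2$, while $m_1 = m_1^{\star}$ and $m_3 = m_3^{\star}$; (iii) $m_3 \neq m_3^{\star}$ with $w(m_3 \oplus m_3^{\star}) = 2$, while $m_1 = m_1^{\star}$ and $m_2 = m_2^{\star}$.

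First I would handle Case (i). Because $m_1^{\star} = m_3^{\star} = m_3$, the equality $w(m_1 \oplus m_3) = 2$ is immediate. The middle block $m_2 = m_2^{\star}$ is then determined by the parity of $w(m_3)$: either $m_2 = m_3$ (even weight) or $m_2 = \overline{m_3}$ (odd weight, equivalent to $w(m_2 \oplus m_3) = k$). This gives both trigger conditions stated in part~1 of the theorem, and the correction $c^{\star} = m_3\,m_2\,m_3$ is read off directly. Case (iii) is strictly symmetric: here $m_3^{\star} = m_1^{\star} = m_1$, so $w(m_1 \oplus m_3) = 2$, and the parity of $w(m_1)$ forces either $m_2 = m_1$ or $w(m_1 \oplus m_2) = k$, yielding the correction $c^{\star} = m_1\,m_2\,m_1$. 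Case (ii) is the only one where $m_1 = m_3$ holds exactly, and the parity of $w(m_1)$ selects between $m_2^{\star} = m_1$ (so $w(m_1 \oplus m_2) = 2$) and $m_2^{\star} = \overline{m_1}$ (so $w(m_1 \oplus m_2) = k-2$); the corrected word is then $m_1\,m_1\,m_3$ or $m_1\,\overline{m_1}\,m_3$ respectively.

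The main obstacle I anticipate is verifying that the three trigger conditions are pairwise incompatible, so that the decoder never misassigns the erroneous block. For instance, Case~(i) and Case~(iii) both stipulate $w(m_1 \oplus m_3) = 2$; however the secondary condition in Case~(i) forces $m_2 = m_3$ or $w(m_2 \oplus m_3) = k$, while Case~(iii) forces $m_2 = m_1$ or $w(m_1 \oplus m_2) = k$; combining these with $w(m_1 \oplus m_3) = 2$ would imply $w(m_1 \oplus m_3) \in \{0, k\}$, which contradicts $w(m_1 \oplus m_3) = 2$ as soon as $k \ge 5$. A similar short check rules out Case~(ii) in the presence of Case~(i) or Case~(iii), because Case~(ii) requires $m_1 = m_3$ and the other two require $w(m_1 \oplus m_3) = 2$. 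With exclusivity established, the identifications in each case follow from the structural relation $m_2^{\star} \in \{m_1^{\star}, \overline{m_1^{\star}}\}$ governing the codewords of $\mathcal{L}_k^{+}$, completing the proof.
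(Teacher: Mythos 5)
Your proposal is correct and follows essentially the same route as the paper: a three-way case analysis on which block ($m_1$, $m_2$, or $m_3$) carries both errors, reading off the observed weight relations from the structural fact that a codeword of $\mathcal{L}_k^{+}$ has the form $m\,m\,m$ or $m\,\overline{m}\,m$ according to the parity of $w(m)$. Your added check that the three trigger conditions are pairwise incompatible (so the decoder cannot misassign the erroneous block) is a worthwhile refinement that the paper's proof omits, and your argument for it is sound.
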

 
 \begin{proof}
 If two errors are presented during transmission, we have 
 \begin{enumerate}
 \item If the both are in $m_1$, the exact message is $m_3$ and $ w (m_1 + m_3) = 2 $. So $ c = m_3 \; m_2 \; m_3 $ where 
  \begin{equation*}
  m_2=\left\lbrace \begin{array}{cc}
  m_3 & \text{ if the weight of }  m_3 \text{ is even } \\
  \text{or} & \\
  \overline{m_3} & \text{ if the weight of } m_3 \;\text{is odd}.
  \end{array}\right.  
  \end{equation*}
  \item If the both errors are in $m_2$, it means that the exact message is $m_1$ and $ m_1=m_3$. Then   
   \begin{equation*}
   w (m_1 + m_2) =\left\lbrace \begin{array}{cc}
   2 & \text{ if the weight of }  m_1 \; \text{ is even } \\
   \text{or} & \\
   k-2 & \text{ if the weight of } m_1 \; \text{is odd}.
   \end{array}\right.  
   \end{equation*}
   For correction,  $ c = m_1 \; m'_2 \; m_3 $ where 
   \begin{equation*}
   m'_2=\left\lbrace \begin{array}{cc}
   m_1 & \text{ if the weight of }  m_1 \; \text{ is even } \\
   \text{or} & \\
   \overline{m_1} & \text{ if the weight of } m_1 \; \text{is odd}.
   \end{array}\right.  
   \end{equation*}
   \item If the errors are in $m_3$, the exact message is $m_1$. So $ w (m_1 + m_3) = 2 $  and $ c = m_1 \; m_2 \; m_1 $ where 
    \begin{equation*}
    m_2=\left\lbrace \begin{array}{cc}
    m_1 & \text{ if the weight of }  m_1 \text{ is even } \\
    \text{or} & \\
    \overline{m_1} & \text{ if the weight of } m_1 \;\text{is odd}.
    \end{array}\right.  
    \end{equation*}
  
 \end{enumerate}
 \end{proof}
 
 \subsection*{Algorithm}
 \begin{center}
 \framebox[1.2\width]{
 \begin{minipage}[C]{.8\textwidth}
 \begin{itemize}
 \item[] Input $c$  (\textit{received word}) 
 \item[] Output $c'$ (\textit{corrected word}) 
 \item[] Begin
 \begin{itemize}
 \item  Determined $m_1$, $m_2$ and $m_3$ such that $c=m_1\; m_2\; m_3$;
 \item  Calculate $w_1=w(m_1)$, $w_3=w(m_3)$, $w_{1,2}=w(m_1 + m_2)$, $w_{2,3}=w(m_2 + m_3)$  and $w_{1,3}=w(m_1 + m_3)$;
 \begin{itemize}
 \item If $w_{1,3}=2$ 
 \begin{itemize}
 \item and if $m_2=m_3$, so $c'=m_3\;m_2\;m_3$;
 \item and if $w_{2,3}=k$, so $c'=m_3\;m_2\;m_3$;
 \item and if $m_2=m_1$, so $c'=m_1\;m_2\;m_1$;
 \item and if $w_{1,2}=k$, so $c'=m_1\;m_2\;m_1$;
 \end{itemize}
 \item If $m_1=m_3$
 \begin{itemize}
 \item and if $w_{1,2}=2$, so $c'=m_1\;m_1\;m_3$;
 \item and if $w_{1,2}=k-2$, so $c'=m_1\;\overline{m_1}\;m_3$;
 \end{itemize}
 \end{itemize}
 \end{itemize}
 \item[] End
 \end{itemize}
 \end{minipage}
 }
 \end{center}
 \vspace{0.5cm} 
 \begin{thm} Suppose that the received code $c$ contains exactly two errors distributed in $m_i$ and $m_j$ ($i\neq j$).
 \begin{enumerate}
 \item If $ w (m_1 + m_3) = 1 $ and $ w (m_3 + m_2) = 1 $, or if $ w (m_1 + m_3) = 1 $ and $ w (m_3 + m_2) = k-1 $, the two errors are distributed : one in $ m_1 $ and the other in $ m_2 $. 
 \item If $ w (m_1 + m_3) = 1 $ and $ w (m_1 + m_2) = 1 $ or if $ w (m_1 + m_3) = 1 $ and $ w (m_1 + m_2) = k-1 $, the two errors are distributed : one in $ m_2 $ and the other in $ m_3 $.
 \item  If $ w (m_1 + m_3) = 2 $ 
 \begin{itemize}
 \item if $ w (m_1 + m_2) = 1 $ and  $ w (m_2 + m_3) = 1 $ (the real message is of even weight)
 \item if $ w (m_1 + m_2) = k-1 $ and $ w (m_2 + m_3) = k-1 $ (the real message is odd weight),
 \end{itemize}
Then, the two errors are distributed : one in $ m_1 $ and the other in $ m_3 $. the value $ w (m_1 + m_3) = 0$  means that the error positions in $ m_1 $ and $ m_3 $ are different.
 \item  If $ w (m_1 + m_3) = 0 $ 
  \begin{itemize}
  \item if $ w (m_1 + m_2) = 1 $ and  $ w (m_2 + m_3) = 1 $ (the original message is of even weight)
  \item if $ w (m_1 + m_2) = k-1 $ and $ w (m_2 + m_3) = k-1 $ (the original message is odd weight),
  \end{itemize}
 So the two errors are distributed : one in $ m_1 $ and the other in $ m_3 $. the value $ w (m_1 + m_3) = 0 $  means that the error positions in $ m_1 $ and $ m_3 $ are the same. 
 
 \end{enumerate}
 
 \end{thm}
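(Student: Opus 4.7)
The plan is to perform a case analysis on where the two errors occur, following the three possibilities allowed by the hypothesis $i \neq j$: errors in blocks $(m_1,m_2)$, $(m_2,m_3)$, or $(m_1,m_3)$, with the last split according to whether the two flipped coordinates coincide. In every case I would write the transmitted codeword as $c' = m'_1 \; m'_2 \; m'_3$ with $m'_1 = m'_3$ and with $m'_2 = m'_1$ or $m'_2 = \overline{m'_1}$ depending on the parity of $w(m'_1)$, and the received word as $c = c' + e$ where $e = e_1\, e_2\, e_3$ has Hamming weight exactly $2$ in $\mathbb{F}_2^{3k}$. The three syndrome-like quantities $w(m_1+m_3)$, $w(m_1+m_2)$, $w(m_2+m_3)$ are then computed directly and matched against the conditions stated in each item of the theorem.

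Cases 1 and 2 proceed symmetrically. Take Case 1 (one error in $m_1$ and one in $m_2$): since $e_3 = 0$ we have $m_3 = m'_3 = m'_1$, so $m_1 + m_3 = e_1$ has weight $1$. For the second syndrome $m_2 + m_3 = m'_2 + m'_1 + e_2$, the parity dichotomy forces this sum to equal $e_2$ (weight $1$) when $w(m'_1)$ is even and to equal the bitwise complement of $e_2$ (weight $k-1$) when $w(m'_1)$ is odd, matching the alternatives in item~1. Case~2 follows by exchanging the roles of the first and third blocks.

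For Cases 3 and 4 both errors live in $m_1$ and $m_3$, so $e_2 = 0$ and the two components $e_1, e_3$ each have weight $1$. Then $m_1 + m_3 = e_1 + e_3$, whose weight is $2$ when the supports differ (Case~3) and $0$ when they coincide (Case~4). The remaining two syndromes $m_1+m_2 = m'_1 + m'_2 + e_1$ and $m_2+m_3 = m'_1 + m'_2 + e_3$ are both governed by the same even/odd dichotomy on $w(m'_1)$, yielding simultaneous weights $1$ in the even case or simultaneous weights $k-1$ in the odd case, exactly as claimed.

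The main obstacle I anticipate is not the forward direction, which is essentially routine $\mathbb{F}_2$-arithmetic, but the diagnostic converse: I must verify that the syndrome patterns catalogued for Cases 1--4 really do identify which pair of blocks carries the errors and cannot be mimicked either by two errors concentrated in a single block (covered by the previous theorem) or by a single error. This uniqueness is guaranteed by the minimum distance $d = 6$, hence correction capacity $e_c = 2$, established in Theorem~\ref{thmprinc2}. I would finalize the argument by tabulating the triple $\bigl(w(m_1+m_3),\; w(m_1+m_2),\; w(m_2+m_3)\bigr)$ for every admissible two-error pattern and verifying that the four cases stated here, together with those of the earlier theorems, partition all possibilities without overlap.
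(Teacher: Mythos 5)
Your forward computations agree with the paper's own proof: both arguments fix where the two errors fall, use $m'_2=m'_1$ or $m'_2=\overline{m'_1}$ according to the parity of $w(m'_1)$, and read off the weights $w(m_1+m_3)$, $w(m_1+m_2)$, $w(m_2+m_3)$; your version with an explicit error vector $e=e_1\,e_2\,e_3$ of weight $2$ is just cleaner bookkeeping for the same calculation. Where you genuinely go beyond the paper is the ``diagnostic converse''. The theorem is phrased as ``syndrome pattern $\Rightarrow$ error location'', but the paper's proof (like the routine half of yours) only establishes ``error location $\Rightarrow$ syndrome pattern''; it never checks that the listed patterns are mutually exclusive, nor that they cannot be produced by other admissible error configurations. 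Your proposed tabulation of the triple $\bigl(w(m_1+m_3),w(m_1+m_2),w(m_2+m_3)\bigr)$ over all patterns of at most two errors, backed by $d=6$ from Theorem \ref{thmprinc2}, is exactly the missing step, and carrying it out is not vacuous: for instance, the pattern $m_1=m_3$ with $w(m_1+m_2)=1$ of item 4 here is syndrome-identical to the single-error-in-$m_2$ case of the earlier one-error theorem, and the two situations are separated only by the parity of $w(m_1)$ (even for one error in $m_2$, odd for two coincident errors in $m_1$ and $m_3$) --- a disambiguation that appears in neither theorem statement. So your plan is sound and, if completed, strictly more rigorous than the published argument.
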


\begin{proof}
Suppose that $c$ contains two errors.
\begin{enumerate}
\item If the errors are distributed : one in $ m_1 $ and the other in $ m_2 $, the exact message is $m_3$ and $ w(m_1+m_3)=1$. Moreover   
   \begin{equation*}
   w (m_2 + m_3) =\left\lbrace \begin{array}{cc}
   1 & \text{ if the weight of }  m_3 \; \text{ is even } \\
   \text{or} & \\
   k-1 & \text{ if the weight of } m_3 \; \text{is odd}.
   \end{array}\right.  
   \end{equation*}
   For correction,  $ c = m_3 \; m'_2 \; m_3 $ where 
   \begin{equation*}
   m'_2=\left\lbrace \begin{array}{cc}
   m_3 & \text{ if the weight of }  m_3 \; \text{ is even } \\
   \text{or} & \\
   \overline{m_3} & \text{ if the weight of } m_3 \; \text{is odd}.
   \end{array}\right.  
   \end{equation*}
\item If the errors are distributed : one in $ m_2 $ and the other in $ m_3 $, it means that the exact message is $m_1$ and $ w(m_1+m_3)=1$. Moreover   
   \begin{equation*}
   w (m_1 + m_2) =\left\lbrace \begin{array}{cc}
   1 & \text{ if the weight of }  m_1 \; \text{ is even } \\
   \text{or} & \\
   k-1 & \text{ if the weight of } m_1 \; \text{is odd}.
   \end{array}\right.  
   \end{equation*}
   For correction,  $ c = m_1 \; m'_2 \; m_1 $ where 
   \begin{equation*}
   m'_2=\left\lbrace \begin{array}{cc}
   m_1 & \text{ if the weight of }  m_1\; \text{ is even } \\
   \text{or} & \\
   \overline{m_1} & \text{ if the weight of } m_1 \; \text{is odd}.
   \end{array}\right.  
   \end{equation*} 
   \item Now suppose that the errors are distributed : one in $m_1$ and the other in $m_3$ and the error positions in $ m_1 $ and $ m_3 $ are different. So we have $w(m_1 +m_3)=2$. 
   \begin{itemize}
   \item If the original message is of even weight i.e. $w(m_1)$ is odd, so $w(m_1 +m_2)=1$ and $w(m_2 +m_3)=1$.\\
   For correction, we have $c=m_2\;m_2\;m_2$.
   \item If the original message is of odd weight i.e. $w(m_1)$ is even, so $w(m_1 +m_2)=k-1$ and $w(m_2 +m_3)=k-1$.\\
      For correction,we have $c=\overline{m_2}\;m_2\;\overline{m_2}$.
   \end{itemize}

	\item Suppose that the errors are distributed : one in $m_1$ and the other in $m_3$ and the error positions in $ m_1 $ and $ m_3 $ are the same. So we have $w(m_1 +m_3)=0$ i.e. $m_1=m_3$. 
      \begin{itemize}
      \item If the original message is of even weight i.e. $w(m_1)$ is odd, so $w(m_1 +m_2)=1$ and $w(m_2 +m_3)=1$.\\
      For correction,we have $c=m_2\;m_2\;m_2$.
      \item If the original message is of odd weight i.e. $w(m_1)$ is even, so $w(m_1 +m_2)=k-1$ and $w(m_2 +m_3)=k-1$.\\
         For correction,we have $c=\overline{m_2}\;m_2\;\overline{m_2}$.
      \end{itemize}
   \end{enumerate}
\end{proof}
 \subsection*{Algorithm}
 \begin{center}
 \framebox[1.2\width]{
 \begin{minipage}[C]{.8\textwidth}
 \begin{itemize}
 \item[] Input $c$  (\textit{received word}) 
 \item[] Output $c'$ (\textit{corrected word}) 
 \item[] Begin
 \begin{itemize}
 \item  Determined $m_1$, $m_2$ and $m_3$ such that $c=m_1\; m_2\; m_3$;
 \item  Calculate $w_1=w(m_1)$, $w_3=w(m_3)$, $w_{1,2}=w(m_1 + m_2)$, $w_{2,3}=w(m_2 + m_3)$  and $w_{1,3}=w(m_1 + m_3)$;
 \begin{itemize}
 \item  If $w_{1,3}=1$ 
 \begin{itemize}
 \item  and if $w_{2,3}=1$, so $c'=m_3\;m_3\;m_3$;
 \item  and if $w_{2,3}=k-1$, so $c'=m_3\;\overline{m_3}\;m_3$;
 \item  and if $w_{1,2}=1$, so $c'=m_1\;m_1\;m_1$;
 \item  and if $w_{1,2}=k-1$, so $c'=m_1\;\overline{m_1}\;m_1$;
 \end{itemize}
 \item  If $w_{1,3}=2$ 
  \begin{itemize}
  \item and if $w_{1,2}=1$ and $w_{2,3}=1$, so $c'=m_2\;m_2\;m_2$;
   \item and if $w_{1,2}=k-1$ and $w_{2,3}=k-1$, so $c'=\overline{m_2}\;m_2\;\overline{m_2}$;
    \end{itemize}
    \item  If $w_{1,3}=0$ 
      \begin{itemize}
      \item and if $w_{1,2}=1$ and $w_{2,3}=1$, so $c'=m_2\;m_2\;m_2$;
       \item and if $w_{1,2}=k-1$ and $w_{2,3}=k-1$, so $c'=\overline{m_2}\;m_2\;\overline{m_2}$;
        \end{itemize}
 \end{itemize}
 \end{itemize}
 \item[] End
 \end{itemize}
 \end{minipage}
 }
 \end{center}
 \vspace{1cm}

\subsection{Comparative analysis between $\mathcal{L}_k ^{+}$ and the code of Hamming}
The table below presents a comparative study between the Hamming code and our  $\mathcal{L}_k ^{+}$ for some cases where the two codes are of the same length.\\
\begin{equation*}
\begin{array}{c|c|c}
& \text{HAMMING CODE} & \text{CODE BUILT FROM }\\
&  & \text{SUB-EXCEEDING  FONCTION}\\\hline
 & & \\
 & \text{It is a linear code of  } & \text{It is a linear code of  }\\
 & \text{the form } & \text{the form } \\
 & [2^{r}-1,2^{r}-r-1,3]. & [3k,k,6]. \\
 & & \\
\text{Form} & \text{For } r=3,\; \text{we have} & \text{For } k=5,\; \text{we have}\\
 & [15,11,3]-\text{linear code}. & [15,5,6]-\text{linear code}.\\
  & & \\
 & \text{For } r=6,\; \text{we have} &  \text{For } k=21,\; \text{we have} \\
  & [63,57,3]-\text{linear code}. & [63,21,6]-\text{linear code}.\\
   & & \\\hline
   & & \\
   &\text{Minimum distance:}& \text{Minimum distance:}\\
    & d=3 & d=6\\
      & \text{Correction capacity:}& \text{Correction capacity:}\\
        & e_{c}=1 & e_{c}=2\\
           & & \\
   \text{Paremeters}   & \text{For the  } [15,11,3]-\text{code}& \text{For the  } [15,5,6]-\text{code}\\
    & \text{Correction rate is } & \text{Correction rate is } \\
    & C_{r}=1/15 & C_{r}=2/15 \\
    & & \\
  & \text{For the  } [63,57,3]-\text{code} & \text{For the  } [63,21,6]-\text{code}\\
    & \text{Correction rate is } &  \text{Correction rate is } \\
     & C_{r}=1/63 & C_{r}=2/63\\
      & & \\\hline
\end{array}
\end{equation*}
These codes have the same length but what make the differences are the dimensions. However, the $[3k,k,6]$-linear systematics code has 2 bit for the correction capacity and then the error detection capability was 5 against the Hamming code which only corrects an error and detects only 2 errors over the length $2^{r}-1$.
\section*{Acknowledgement}
The authors would like to thank the anonymous referees for their careful reading of our manuscript and
helpful suggestions.
\section*{Ethics}
The authors declare that there is no conflict interests regarding the publication of this manuscript.



\begin{thebibliography}{99}

\bibitem{PA2} Abbrugiati Pierre , 
{\it Introduction aux codes correcteurs d’erreurs}, 23 janvier 2006.

\bibitem{WAD} Adams W., Loustaunau P., {\it An Introduction to Groebner Bases}, AMS Lecture Series, Providence, RI, 3 (1994).

\bibitem{TBV} Becker T., Weispfenning V. , Groebner Bases – {\it A Computational Approach to Commutative Algebra}, Springer, New York (1998).

\bibitem{MBQ} Borges-Quintana  M. , M.A. Borges-Trenard, P. Fitzpatrick, E. Martinez-Moro, {\it Groebner bases and combinatorics for binary codes}, AAECC, 19(2008), 393-411.

\bibitem{MJE} Golay M.J.E., {\it Notes on digital coding}, Proc. IRE, 37 (1949), 657.

\bibitem{Mh} Hall M., Proc. Sympos. Appl. Math. 
{\it Combinatorial Analysis}, Vol.6, Math. Soc., Providence, R.I., p.203 and following; 1956.

\bibitem{KN} Knuth D. 
{\it The Art of Computer Programming}, vol. 3. Sorting and Searching, 2nd edn. Addison-Wesley, Reading (1981)

\bibitem{DHL} Lahmer  D.H. , 
{\it Teaching combinatorial tricks to a computer}, Proc. Sympos. Appl. Math. Combinatorial Analisis, V10, Amer. Math. Soc, Providence, 179-193, 1960.

\bibitem{FJ}MacWilliams  F.J. , Sloane N.J.A. , {\it Error Correcting Codes}, North Holland,
New York (1977).

\bibitem{RM} Mantaci R., 
{\it Sur la distribution des anti-excédances dans le groupe symétrique et dans ses sous-groupes}, Theoret. Comp. Science A, 117, 243-253, 1993.

\bibitem{MF} Mantaci R., Rakotondrajao F. , 
{\it A permutation that knows what Eulerian means}, Discrete Mathematics and Theortical Computer Science,May 2001.

\bibitem{ME} Mehwish Saleemi, Karl-Heinz Zimmermann, {\it Groebner Bases For Linear Codes}, International Journal of Pure and Applied Mathematics,Volume 62 No. 4 2010, 481-491. 

\bibitem{PA1} Melvyn el  Kamel- Meurigne, Sous la direction de Benoit Fabrèges, 
{\it Code correcteur d'erreurs}, 23 janvier 2006.

\bibitem{Luc2} Rabefihavanana L.,
{\it Encoding of Partition Set Using Sub-exceeding Function}, International Journal of Contemporary Mathematical Sciences, Vol. 13, 2018, no. 2, 63 - 78.
\bibitem{Luc1} Rabefihavanana L.,
{\it Part of a set and sub-exceeding fonction (Coding and Decoding)}, BULLETIN OF MATHEMATICS AND STATISTICS RESEARCH, Vol.5.Issue.3.2017 (July-Sept) Ky PUBLICATIONS.

\bibitem{JH}Van Lint  J.H. , {\it Introduction to Coding Theory}, Springer, Berlin (1999).

\end{thebibliography}
\end{document}